\DeclareMathOperator*{\argmax}{arg\,max}
\theoremstyle{plain} 
\newtheorem{theorem}{Theorem}
\newtheorem{corollary}{Corollary}
\newtheorem{definition}{Definition}
\newtheorem{lemma}{Lemma}
\theoremstyle{definition} \newtheorem{remark}{Remark}
\theoremstyle{definition}
\title{Maximal Guesswork Leakage\thanks{The work of M. Managoli and V. Prabhakaran was supported by DAE under project no. RTI4001. V. Prabhakaran was additionally supported by SERB through project MTR/2020/000308.}}
\author{%
\IEEEauthorblockN{Gowtham R. Kurri\IEEEauthorrefmark{1},
                Malhar A. Managoli\IEEEauthorrefmark{2},
                Vinod M. Prabhakaran\IEEEauthorrefmark{2}
                    }

                    \IEEEauthorblockA{\IEEEauthorrefmark{1}%
                    International Institute of Information Technology, Hyderabad, India, \texttt{gowtham.kurri@iiit.ac.in}}

                    \IEEEauthorblockA{\IEEEauthorrefmark{2}%
                    Tata Institute of Fundamental Research, Mumbai, India, \texttt{\{malhar.managoli,vinodmp\}@tifr.res.in}}

}
\def \extended {1}
\begin{document}
\maketitle
\begin{abstract}
    We introduce the study of information leakage through \emph{guesswork}, the minimum expected number of guesses required to guess a random variable. In particular, we define \emph{maximal guesswork leakage} as the multiplicative decrease, upon observing $Y$, of the guesswork of a randomized function of $X$, maximized over all such randomized functions. We also study a pointwise form of the leakage which captures the leakage due to the release of a single realization of $Y$. We also study these two notions of leakage with oblivious (or memoryless) guessing. We obtain closed-form expressions for all these leakage measures, with the exception of one. Specifically, we are able to obtain closed-form expression for maximal guesswork leakage for the binary erasure source only; deriving expressions for arbitrary sources appears challenging. Some of the consequences of our results are -- a connection between guesswork and differential privacy and a new operational interpretation to maximal $\alpha$-leakage in terms of guesswork.
\end{abstract}

\section{Introduction}\label{ection:intro}
Quantification of information leakage plays a crucial role in many applications, for example, in ensuring the security of sensitive data within communication systems, evaluating the efficiency of cryptographic protocols, in safeguarding sensitive information, and analyzing privacy-preserving models in federated learning among others. The fundamental goal in information leakage is to quantify how much information does data released to an adversary reveal about correlated sensitive data. This has been addressed by various works in the information theory literature~\cite{smith2009foundations,braun2009quantitative,du2012privacy,osia2019privacy,IssaWK20,LiaoSKC20,farokhi2021measuring,SaeidianCOS23,zarrabian2023lift}.

A prominent theme in the literature on quantifying information leakage involves the development of leakage measures with \emph{operational interpretation}. This approach ensures that the amount of information leaked is directly linked to specific security guarantees. The works~\cite{smith2009foundations,braun2009quantitative,IssaWK20,LiaoSKC20,SaeidianCOS23} use a \emph{guessing} framework to propose various operationally meaningful leakage measures with a focus on the multiplicative increase, upon the observation of a released random variable, in the probability of accurately guessing a sensitive random variable. In particular, Issa~\emph{et al.}~\cite{IssaWK20} introduce \emph{maximal leakage} as the logarithm of the multiplicative increase, upon observing $Y$, of the probability of correctly guessing a randomized function of $X$ in a single try, maximized over all such randomized functions.

In this paper, we study information leakage with emphasis on the guessing framework. However, rather than assessing the adversary's performance based on the probability of correctness in a single attempt, we allow the adversary to make \emph{any} number of guesses. We measure the performance through the minimum expected number of guesses required to accurately predict a random variable, which is termed as \emph{guesswork}~\cite{massey1994guessing,pliam1998disparity,arikan1996inequality}. In particular, we study information leakage through guesswork. 

We define \emph{maximal guesswork leakage} as the multiplicative decrease, upon observing $Y$, of the guesswork of a randomized function of $X$, maximized over all such randomized functions\footnote{This notion of leakage is, in part, inspired by an observation by Arikan~\cite{arikan1996inequality} which states that the asymptotic multiplicative decrease, upon observing $Y$, in the guesswork of $X$ can be interpreted as a \emph{complexity reduction} provided by the knowledge of $Y$ in guessing the value of $X$.}. We also study a pointwise form of the leakage, called \emph{pointwise maximal guesswork leakage} which captures the leakage due to the release of a single realization $y$ of $Y$ rather than the average outcome of $Y$. We also explore information leakage through guesswork in the context of \emph{oblivious} or \emph{memoryless guessing}~\cite{HanawalS12,boztas2012oblivious,HuleihelSM17,Slamatianetal19}, wherein an adversary cannot keep track of previous guesses. In particular, we study analogous leakages for oblivious guessing. It is worth noting that the works \cite{osia2019privacy} and \cite{farokhi2021measuring} also study information leakage based on number of guesses. In particular, the authors in \cite{osia2019privacy} study information leakage using guesswork for the scenario when an adversary is interested in guessing $X$ itself, instead of a possibly randomized function of $X$ as we do here. A non-stochastic setting of guessing is considered in \cite{farokhi2021measuring}.



Our main contributions are as follows:

\begin{itemize}
    \item We show that the pointwise maximal guesswork leakage is equal to the R\'{e}nyi divergence of order infinity between the \emph{a priori} distribution $P_X$ and the \emph{a posteriori} distribution $P_{X|Y=y}$ (Theorem~\ref{thm:pointwiseleakage}, and a generalization, Theorem~\ref{thm:pointwiseleakagewithh}, including $\rho$-th moments of guessing number). A consequence of this establishes a connection between guesswork and differential privacy (Corollary~\ref{Corollary:pwleakagewithDP}).
    \item We obtain a closed-form expression for maximal guesswork leakage for the binary erasure source~(Theorem~\ref{theorem:BES}). Deriving a closed-form expression for the leakage with an arbitrary distribution $P_{XY}$ appears challenging.
    \item We show that oblivious maximal $\rho$-guesswork leakage (Definition~\ref{defn:obmgl}) is proportional to the Arimoto channel capacity of order $\alpha=\frac{1}{1+\rho}$~\cite{arimoto1977information} (Theorem~\ref{thm:obmgl}). This provides a new operational interpretation to maximal $\alpha$-leakage~\cite{LiaoKS20} in terms of guesswork.
    \item Finally, we show that the pointwise oblivious maximal $\rho$-guesswork leakage (Definition~\ref{defn:pobmgl}) is equal to the R\'{e}nyi divergence of order infinity between $P_X$ and $P_{X|Y=y}$ (Theorem~\ref{thm:pwrhoguesleak}). It is interesting to note that this leakage does not depend on the value of $\rho$.
\end{itemize}

\section{Preliminaries}
\subsection{R\'{e}nyi Information Measures}
Most of the existing leakage measures and the new leakage measures we study in this paper can be expressed in terms of R\'{e}nyi information measures.
\begin{definition}[R\'{e}nyi divergence~\cite{renyi1961measures}]
The R\'{e}nyi divergence of order $\alpha\in(0,1)\cup(1,\infty)$ between two probability distributions $P_X$ and $Q_X$ on a finite alphabet $\mathcal{X}$ is defined as
\begin{align}\label{eqn:Renyidiv-def}
    D_\alpha(P_X||Q_X)=\frac{1}{\alpha-1}\log\left(\sum_{x\in\mathcal{X}}P_X(x)^\alpha Q_X(x)^{1-\alpha}\right).
\end{align}
It is defined by its continuous extension for $\alpha=1$ and $\alpha=\infty$, respectively, and is given by
\begin{align}
    D_1(P_X||Q_X)&=\sum_{x\in\mathcal{X}}P_X(x)\log\frac{P_X(x)}{Q_X(x)},\\
    D_\infty(P_X||Q_X)&=\max_{x\in\mathcal{X}}\log\frac{P_X(x)}{Q_X(x)}.
\end{align}
\end{definition}
\begin{definition}[Arimoto mutual information~\cite{arimoto1977information}]
    Given a joint distribution $P_{XY}$ on a finite alphabet $\mathcal{X}\times \mathcal{Y}$, the Arimoto mutual information of order $\alpha\in(0,1)\cup (1,\infty)$ is defined as
    \begin{align}
        I_\alpha^{\text{A}}(X;Y)=H_\alpha(X)-H_\alpha^{\text{A}}(X|Y),
    \end{align}
    where R\'{e}nyi entropy $H_\alpha(X)$~\cite{renyi1961measures} and Arimoto conditional entropy $H_{\alpha}^{\text{A}}(X|Y)$~\cite{arimoto1977information} are given by
    \begin{align}
        H_\alpha(X)=\frac{1}{1-\alpha}\log{\sum_{x\in\mathcal{X}}P_X(x)^\alpha}
    \end{align}
    and 
    \begin{align}
        H_\alpha^{\text{A}}(X|Y)=\frac{\alpha}{1-\alpha}\log{\!\sum_{y\in\emph{supp}(Y)}\!\!P_Y(y)\left(\sum_{x\in\mathcal{X}}P_{X|Y}(x|y)^\alpha\right)^\frac{1}{\alpha}},
    \end{align}
    respectively.
\end{definition}

\subsection{Guesswork}
Consider an adversary interested in guessing the realization of a random variable $X$ by asking questions of the form ``Is $X=x$?" until the answer is ``Yes".
\begin{definition}[Guesswork~\cite{massey1994guessing,pliam1998disparity}]
    A function $G:\mathcal{X}\rightarrow [1:|\mathcal{X}|]$ is called a guessing function for a random variable $X$ taking values in $\mathcal{X}$ if $G$ is one-to-one. Given a guessing function $G$, guessing number $G(x)$ is the number of guesses required to guess $x$, i.e., the time index of the question 'Is $X=x$?'. The guesswork is the minimum of the expected number of guesses required to guess $X$, i.e., $\min_{G}\mathbb{E}[G(X)]$, where the minimum is over all guessing functions $G$.
\end{definition}
Let $P_X$ be the probability distribution of $X$ taking values in $\mathcal{X}=\{x_1,x_2,\dots,x_n\}$  and suppose that $P_X(x_i)\geq P_X(x_{i+1})$, for $i\in[1:n-1]$, without loss of generality. The optimal guessing strategy is to guess in non-increasing order of probability values. So, we have
\begin{align}
     \min_{G}\mathbb{E}[G(X)]=\sum_{i=1}^n iP_X(x_i).
\end{align}
Arikan~\cite{arikan1996inequality} studied the $\rho$-th moments of guessing number for $\rho>0$, and obtained bounds on the same:
\begin{align}
     \min_{G}\mathbb{E}[G(X)^\rho]=\sum_{i=1}^n i^\rho P_X(x_i).
\end{align}
Salamatian~\emph{et~al.}~\cite{Slamatianetal19} considered oblivious guessing (also called memoryless guessing), wherein an adversary cannot keep track of the previous guesses. In particular, an adversary presents a sequence of independent and identically distributed (i.i.d.) guesses $\hat{X}_1^\infty:=(\hat{X}_1,\hat{X}_2,\dots)$ drawn from some distribution $P_{\hat{X}}$ to guess $X$. The number of guesses until a success is defined as the corresponding guessing number:
\begin{align}
    {G}(X,\hat{X}_1^\infty)=\inf\{k\geq 1:\hat{X}_k=X\}.
\end{align}
Analogous to the $\rho$-th moment of guessing number, Salamatian~\emph{et~al.}~\cite{Slamatianetal19} studied the following optimization problem:
\begin{align}\label{eqn:Medard-factorialfn}
    \inf_{P_{\hat{X}}}\mathbb{E}\left[V_\rho(X,\hat{X}_1^\infty)\right],
\end{align}
where $V_\rho(X,\hat{X}_1^\infty)=\binom{{G}(X,\hat{X}_1^\infty)+\rho-1}{\rho}$, $\rho>0$, and $\binom{x}{y}$ is the generalized binomial coefficient defined in terms of the gamma function $\Gamma(\cdot)$ as $\binom{x}{y}=\frac{\Gamma(x+1)}{\Gamma(y+1)\Gamma(x-y+1)}$.

\subsection{Maximal Leakage}
\begin{definition}[Maximal leakage\cite{IssaWK20}]\label{defn:maxL}
Given a joint distribution $P_{XY}$ on a finite alphabet $\mathcal{X}\times\mathcal{Y}$, the maximal leakage from $X$ to $Y$ is defined as 
\begin{align}
    \mathcal{L}(X\rightarrow Y)=\sup_{U:U-X-Y}\log{\frac{\sup_{P_{\hat{U}|Y}}\mathbb{E}[P_{\hat{U}|Y}(U|Y)]}{\sup_{P_{\hat{U}}}\mathbb{E}[P_{\hat{U}}(U)]}}.
\end{align} 
\end{definition}
From the above definition, maximal leakage is logarithm of the multiplicative increase, upon observing $Y$, of the probability of correctly guessing a randomized function of $X$, maximized over all such randomized functions.
Issa~\emph{et al.}~\cite[Theorem~1]{IssaWK20} showed that     
\begin{align}
    \mathcal{L}(X\rightarrow Y)=\log{\sum_{y\in\mathcal{Y}}\max_{x\in\text{supp}(X)}P_{Y|X}(y|x)}.
\end{align}

\section{Maximal Guesswork Leakage}\label{section:mgl}
Suppose an adversary is interested in guessing a randomized function $U$ of a hidden random variable $X$ by asking questions of the form ``Is $U$ equal to $u$?'' until the answer is ``Yes''.  The guesswork of $U$, i.e., $\min_{G}\mathbb{E}[G(U)]$, can be viewed as a cost incurred by adversary in guessing $U$. We define maximal guesswork leakage as follows.  
\begin{definition}[Maximal guesswork leakage]\label{definition:maxLviaGE}
    Let $P_{XY}$ be a joint distribution on a finite alphabet $\mathcal{X}\times\mathcal{Y}$. The maximal guesswork leakage from $X$ to $Y$ is defined by
    \begin{align}\label{eqn:maxLviaGE}
    \mathcal{L}^{G}(X\rightarrow Y)=\sup_{U:U-X-Y}\!\!\log{\frac{\min_{G}\mathbb{E}[G(U)]}{\min_{\{G_y: y\in\mathcal{Y}\}}{\mathbb{E}[G_Y(U)]}}},
    \end{align}
    where $\{G_y:y\in\mathcal{Y}\}$ is collection of guessing functions, one for each $y$, $\mathbb{E}[G_Y(U)]=\sum_{y\in\mathcal{Y}}P_Y(y)\mathbb{E}[G_y(U)|Y=y]$ and $U$ takes values in an arbitrary finite alphabet.
\end{definition}
\begin{remark}
    Maximal guesswork leakage in \eqref{eqn:maxLviaGE} is the multiplicative decrease, upon observing $Y$, of the guesswork of a randomized function of $X$, maximized over all such randomized functions. The notion of supremum over all randomized functions $U$ such that $U-X-Y$ forms a Markov chain in \eqref{eqn:maxLviaGE} is adapted from the framework of maximal leakage~\cite{IssaWK20}. 
\end{remark}
We next define a pointwise version of maximal guesswork leakage.
\begin{definition}[Pointwise maximal guesswork leakage]\label{definition:PWmaxLviaGE}
    Let $P_{XY}$ be a joint distribution on a finite alphabet $\mathcal{X}\times\mathcal{Y}$. The pointwise maximal guesswork leakage from $X$ to $y\in\emph{supp}(Y)$ is defined by
    \begin{align}\label{eqn:PWmaxLviaGE}
        \!\mathcal{L}^{G\emph{-pw}}(X\rightarrow y)=\! \sup_{U:U-X-Y}\log{\frac{\min_{G}\mathbb{E}[G(U)]}{\min_G{\mathbb{E}[G(U)|Y=y]}}},
    \end{align}
    where $U$ takes values in an arbitrary finite alphabet.
\end{definition}
    In view of the interpretation of guesswork as a cost incurred to a guessing adversary, maximal guesswork leakage in \eqref{definition:maxLviaGE} is related to maximal cost leakage studied by Issa~\emph{et~al.}~\cite[Section~VI-E]{IssaWK20}. In the same manner, pointwise maximal guesswork leakage is related to maximal realizable cost leakage~\cite[Section~VI-E]{IssaWK20}. We outline the distinction between the measures in \eqref{eqn:maxLviaGE} and \eqref{eqn:PWmaxLviaGE}, and that of Issa~\emph{et~al.}~\cite[Definitions~11 and 12]{IssaWK20} below.
    \begin{remark}\label{remark2}
    Issa~\emph{et al.}~\cite[Definitions~11]{IssaWK20} define maximal cost leakage as
    \begin{align}\label{eqn:maximalcostleakage}
        \mathcal{L}^{\text{c}}(X\rightarrow Y)=\sup_{\substack{U:U-X-Y\\ \hat{\mathcal{U}},d:\mathcal{U}\times\hat{\mathcal{U}}\rightarrow \mathbb{R}_+}}\log{\frac{\inf_{\hat{u}\in\hat{\mathcal{U}}}\mathbb{E}[d(U,\hat{u})]}{\inf_{\hat{u}(\cdot)}\mathbb{E}[d(U,\hat{u}(Y))]}}.
    \end{align}
    Note that the expression in \eqref{eqn:maximalcostleakage}, when $d(u,\hat{u})$ is viewed as the cost incurred in guessing $u$ as $\hat{u}$, concerns with the maximum reduction in cost that the adversary incurs. Consider, for $y\in\text{supp}(Y)$,
    \begin{align}\label{eqn:maximalcostleaakgeissa}
    \mathcal{L}^\text{pc}(X\rightarrow y)=\sup_{\substack{U:U-X-Y\\ \hat{\mathcal{U}},d:\mathcal{U}\times\hat{\mathcal{U}}\rightarrow \mathbb{R}_+}}\log{\frac{\inf_{\hat{u}\in\hat{\mathcal{U}}}\mathbb{E}[d(U,\hat{u})]}{\inf_{\hat{u}}\mathbb{E}[d(U,\hat{u})|Y=y]}}.
\end{align}
    
    Issa~\emph{et al.}~\cite[Definition~12]{IssaWK20} define $\max_{y\in\text{supp}(Y)}\mathcal{L}^\text{c}(X\rightarrow y)$ as maximal realizable cost leakage. The guessing number $G(u)$ is a special case of the cost function $d(u,\hat{u})$ (considered in \eqref{eqn:maximalcostleakage} and \eqref{eqn:maximalcostleaakgeissa}):  define $\hat{\mathcal{U}}$ to be the set of all permutations of $\mathcal{U}$ and
    \begin{align}\label{eqn:gtoc}
    d(u,\hat{u})=\sum_{i=1}^{|\mathcal{U}|}i\mathbbm{1}\{u=\hat{u}_i\},
    \end{align}
    where each permutation $\hat{u}=(\hat{u}_1,\hat{u}_2,\dots,\hat{u}_{|\mathcal{U}|})$ inherently determines a guessing function $G$. The expressions in \eqref{eqn:maximalcostleakage} and \eqref{eqn:maximalcostleaakgeissa} consider the worst-case scenario over all cost functions by taking supremum over $d$ and $\hat{\mathcal{U}}$. Thus, if we fix particular choice of $d$ as in \eqref{eqn:gtoc},  we get \eqref{eqn:maxLviaGE} and \eqref{eqn:PWmaxLviaGE} from \eqref{eqn:maximalcostleakage} and \eqref{eqn:maximalcostleaakgeissa}, respectively. So, we have the following upper bounds on \eqref{eqn:maxLviaGE} and \eqref{eqn:PWmaxLviaGE}.
    \begin{align}
         \mathcal{L}^{G}(X\rightarrow Y)&\leq \mathcal{L}^\text{c}(X\rightarrow Y),\label{eqn:boundonavg}\\
          \mathcal{L}^{G\text{-pw}}(X\rightarrow y)&\leq  \mathcal{L}^\text{pc}(X\rightarrow y), \ y\in\text{supp}(Y)\label{eqn:boundonpoint}.
    \end{align}
    Issa~\emph{et~al.}~\cite[Theroems~15 and 16]{IssaWK20} obtained closed-form expressions for \eqref{eqn:maximalcostleakage} and \eqref{eqn:maximalcostleaakgeissa} by constructing a counterintuitive\footnote{The cost function $d(u,\hat{u})=\frac{1}{P_U(u)}\mathbbm{1}\{u=\hat{u}\}$ is counterintuitive in that it is maximum when the adversary's guess is correct and it is minimum when the guess is wrong.} cost function $d(u,\hat{u})=\frac{1}{P_U(u)}\mathbbm{1}\{u=\hat{u}\}$. Substituting those expressions into \eqref{eqn:boundonavg} and \eqref{eqn:boundonpoint} gives  
     \begin{align}
          \mathcal{L}^{G}(X\rightarrow Y)&\leq -\log{\sum_{y\in\mathcal{Y}}\min_{x\in\mathcal{X}:P_X(x)>0}P_{Y|X}(y|x)},\label{eqn:upperboundonmaxLGE}\\
           \mathcal{L}^{G\text{-pw}}(X\rightarrow y)&\leq D_{\infty}(P_X\|P_{X|Y=y}).\label{eqn:upperboundonpwmaxLGE}
    \end{align}
    Note that {\cite[Theorem~16]{IssaWK20}} actually considers $\max_{y\in\text{supp}(Y)}\mathcal{L}^\text{pc}(X\rightarrow y)$, however, the expression for $\mathcal{L}^\text{pc}(X\rightarrow y)$ can be inferred from \cite[Proof of Theroem~16]{IssaWK20}. 
\end{remark}
Interestingly, we show that the bound in \eqref{eqn:upperboundonpwmaxLGE} is tight while the bound in \eqref{eqn:upperboundonmaxLGE} is not tight, in general. The following theorem shows that the bound in \eqref{eqn:upperboundonpwmaxLGE} is tight. 
\begin{theorem}[Pointwise maximal guesswork leakage ]\label{thm:pointwiseleakage}
    For any joint probability distribution $P_{XY}$ on a finite alphabet $\mathcal{X}\times\mathcal{Y}$, the pointwise maximal guesswork leakage from $X$ to $y\in\emph{supp}(Y)$ is given by
    \begin{align}
        \mathcal{L}^{G\emph{-pw}}(X\rightarrow y)=D_\infty(P_X\| P_{X|Y=y}).
    \end{align}
\end{theorem}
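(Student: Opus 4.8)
The plan is to prove the matching lower bound $\mathcal{L}^{G\text{-pw}}(X\to y)\ge D_\infty(P_X\|P_{X|Y=y})$, since the reverse inequality is already available as \eqref{eqn:upperboundonpwmaxLGE} (inherited from the pointwise cost-leakage bound of Issa \emph{et~al.}). Restricting to $x\in\mathcal{X}$ with $P_X(x)>0$, I would set $r=\max_{x}\frac{P_X(x)}{P_{X|Y}(x|y)}$ (with $r=\infty$ when the maximizing $x$ has $P_{X|Y}(x|y)=0$), so that $\log r=D_\infty(P_X\|P_{X|Y=y})$, fix a maximizer $x_0$, and note that we may assume $r>1$, as the claim is trivial otherwise ($D_\infty=0$ and $\mathcal{L}^{G\text{-pw}}(X\to y)\ge 0$ by taking $U$ constant); then $0<P_X(x_0)<1$. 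Since the leakage is a supremum over randomized functions, it then suffices to produce one family $\{U^{(K)}\}_{K\ge1}$ of randomized functions of $X$ whose ratio of unconditional to conditional ($Y=y$) guesswork approaches $r$.

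For the family, I would let $U^{(K)}$ take values in $\{0,1,\dots,K\}$ and be generated from $X$ by drawing $U^{(K)}$ uniformly on $\{1,\dots,K\}$ when $X=x_0$, and setting $U^{(K)}=0$ when $X\ne x_0$ (so $U^{(K)}-X-Y$ holds automatically). Writing $\tau=1-P_X(x_0)$ and $\sigma=1-P_{X|Y}(x_0|y)$, one computes that under $P_{U^{(K)}}$ the symbol $0$ has probability $\tau$ and each of $1,\dots,K$ has probability $\frac{1-\tau}{K}$, and under $P_{U^{(K)}|Y=y}$ the symbol $0$ has probability $\sigma$ and each other symbol has probability $\frac{1-\sigma}{K}$. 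For all $K$ large enough that $\tau\ge\frac{1-\tau}{K}$ and $\sigma\ge\frac{1-\sigma}{K}$, the optimal probability-ordered guessing strategy queries the symbol $0$ first under both laws, so that
\begin{align}
\min_{G}\mathbb{E}[G(U^{(K)})]&=\tau+(1-\tau)\tfrac{K+3}{2},\\
\min_{G}\mathbb{E}[G(U^{(K)})\,|\,Y=y]&=\sigma+(1-\sigma)\tfrac{K+3}{2},
\end{align}
the latter being $1$ when $\sigma=1$ (the case $r=\infty$). Dividing and letting $K\to\infty$, the ratio tends to $\frac{1-\tau}{1-\sigma}=\frac{P_X(x_0)}{P_{X|Y}(x_0|y)}=r$ (diverging when $\sigma=1$), so $\mathcal{L}^{G\text{-pw}}(X\to y)\ge\sup_{K}\log\frac{\min_G\mathbb{E}[G(U^{(K)})]}{\min_G\mathbb{E}[G(U^{(K)})\,|\,Y=y]}\ge\log r$, and combining with \eqref{eqn:upperboundonpwmaxLGE} finishes the proof.

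The hard part is discovering this randomized function: $U$ needs to be nearly uniform on a large alphabet a priori (large guesswork) but concentrated after $y$ is revealed (small guesswork), and the only lever is the channel $P_{U|X}$ under the Markov constraint. The point is that the state $x_0$ attaining $D_\infty(P_X\|P_{X|Y=y})$ is precisely the one whose mass is most over-represented under $P_X$ relative to $P_{X|Y=y}$; dumping that mass onto a long uniform block while collapsing all other states onto a single symbol yields an a-priori guesswork of order $P_X(x_0)\cdot\frac{K}{2}$ against an a-posteriori guesswork of order only $P_{X|Y}(x_0|y)\cdot\frac{K}{2}$, so the ratio is asymptotically $P_X(x_0)/P_{X|Y}(x_0|y)$. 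Everything past the construction is a one-line arithmetic-series computation together with routine bookkeeping for the boundary cases $r=1$ and $r=\infty$.
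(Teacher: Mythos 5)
Your proof is correct and takes essentially the same route as the paper: the paper's lower bound (proved for the more general Theorem~\ref{thm:pointwiseleakagewithh}) also uses the ``shattering'' conditional $P_{U|X}$ that blows the maximizer $x^*$ of $P_X(x)/P_{X|Y}(x|y)$ into a large uniform block of size $m_{x^*}\to\infty$ and keeps the rest of $\mathcal{X}$ negligible. The only (cosmetic) difference is that the paper retains a distinct singleton symbol for each $x\ne x^*$ while you collapse them all onto the single symbol $0$, which mildly simplifies the arithmetic-series bookkeeping without changing the argument.
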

We in fact prove a more general version of the above theorem which considers a function of the guessing number instead. Let  $\mathcal{L}^{h(G)\text{-pw}}(X\rightarrow y)$ denote the corresponding leakage where $h:\mathbb{N}\rightarrow \mathbb{R}_+$ is a function of the guessing number.
\begin{align}\label{eqn:pwleakagewithh}
 \mathcal{L}^{h(G)\text{-pw}}(X\rightarrow y)= \sup_{U:U-X-Y}\log{\frac{\min_{G}\mathbb{E}\left[h(G(U))\right]}{\min_G{\mathbb{E}\left[h(G(U))|Y=y\right]}}}.
 \end{align}
 Notice that $h(G(u))$ can also be viewed as a cost function defined by 
 \begin{align}\label{eqn:guessascost}
 d(u,\hat{u})=\sum_{i=1}^{|\mathcal{U}|}h(i)\mathbbm{1}\{u=\hat{u}_i\},
 \end{align}
 where $\hat{\mathcal{U}}$ is the set of all permutations of $\mathcal{U}$ and each permutation $\hat{u}=(\hat{u}_1,\hat{u}_2,\dots,\hat{u}_{|\mathcal{U}|})$ inherently determines a guessing function $G$.
\begin{theorem}\label{thm:pointwiseleakagewithh}
     Let $P_{XY}$ be a joint distribution on a finite alphabet $\mathcal{X}\times\mathcal{Y}$ and $h:\mathbb{N}\rightarrow \mathbb{R}_+$ be a non-decreasing function such that $h(n)\rightarrow \infty$ as $n\rightarrow \infty$. Then we have, for $y\in\emph{supp}(Y)$,
     \begin{align}\label{eqn:pointwiseleakagewithh1}
         \mathcal{L}^{h(G)\emph{-pw}}(X\rightarrow y)=D_\infty({P_X\| P_{X|Y=y}}).
     \end{align}
\end{theorem}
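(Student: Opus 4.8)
The plan is to establish the claimed equality via matching upper and lower bounds. The upper bound $\mathcal{L}^{h(G)\text{-pw}}(X\rightarrow y)\le D_\infty(P_X\|P_{X|Y=y})$ should follow from the same reasoning already sketched in Remark~\ref{remark2}: $h(G(u))$ is a special case of the cost function in \eqref{eqn:guessascost}, so $\mathcal{L}^{h(G)\text{-pw}}(X\rightarrow y)\le \mathcal{L}^{\text{pc}}(X\rightarrow y)=D_\infty(P_X\|P_{X|Y=y})$ by \cite[Theorem~16]{IssaWK20}. (If one wants a self-contained argument, the upper bound can be obtained directly: for any $U-X-Y$, lower-bound $\min_G\mathbb{E}[h(G(U))|Y=y]$ by comparing term-by-term against the sorted $P_{U|Y=y}$, then use $P_{U|Y=y}(u)\le 2^{D_\infty(P_X\|P_{X|Y=y})}P_U(u)$, which holds because $U-X-Y$ forces $\frac{P_{U|Y=y}(u)}{P_U(u)}\le \max_x \frac{P_{X|Y=y}(x)}{P_X(x)}$.)

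The substance is the matching lower bound: for every $\epsilon>0$ one must exhibit a randomized function $U$ of $X$ achieving ratio at least $D_\infty(P_X\|P_{X|Y=y})-\epsilon$. First I would let $d^\star=D_\infty(P_X\|P_{X|Y=y})$ and fix $x^\star\in\arg\max_x \frac{P_{X|Y=y}(x)}{P_X(x)}$, so $P_{X|Y=y}(x^\star)=2^{d^\star}P_X(x^\star)$. The idea is to build $U$ supported on a large "noise" alphabet together with one special symbol $u^\star$ such that, a priori, $u^\star$ is far down the optimal guessing order (so $\min_G\mathbb{E}[h(G(U))]$ is essentially $h$ evaluated at a large index), while conditioned on $Y=y$ the symbol $u^\star$ becomes the most likely and is guessed first (so the conditional guesswork is driven down by the factor $2^{d^\star}$). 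Concretely, take $U$ to equal $u^\star$ when $X=x^\star$, and when $X\ne x^\star$ let $U$ be uniform over a set of $N$ dummy symbols, with $N$ chosen large; then $P_U(u^\star)=P_X(x^\star)$ while each dummy has probability $\frac{1-P_X(x^\star)}{N}$, which for large $N$ is far below $P_U(u^\star)$, so $u^\star$ is guessed first a priori and all $N$ dummies come after — wait, that is backwards, so instead one makes $u^\star$ the *least* likely a priori: replace $x^\star$ by the conditioning so that $P_U(u^\star)$ is small a priori but $P_{U|Y=y}(u^\star)$ is large. The clean construction (mirroring the maximal-leakage lower bound) is: $U\in\{u^\star\}\cup\{1,\dots,N\}$; set $P_{U|X}(u^\star|x^\star)=1$ and, for $x\ne x^\star$, $P_{U|X}(\,\cdot\,|x)$ uniform on $\{1,\dots,N\}$. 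Then a priori $P_U(u^\star)=P_X(x^\star)$ can be made the smallest mass by taking $N$ small—but we actually want $u^\star$ ranked last a priori, requiring $P_X(x^\star)$ small relative to $\frac{1-P_X(x^\star)}{N}$, i.e. $N$ small; and ranked first conditioned on $y$, requiring $P_{X|Y=y}(x^\star)\ge \frac{1-P_{X|Y=y}(x^\star)}{N}$, i.e. $N$ large. These clash unless $D_\infty$ is large, so the correct device is to amplify: replace the single symbol $x^\star$ by $k$ i.i.d.\ copies (work with $X^k$, which is legitimate since the definition allows any finite alphabet for the observation only through $P_{XY}$—so instead one takes $U$ to be a randomized function that internally "simulates" $k$ independent draws), pushing the a priori rank of $u^\star$ to grow like $2^{k\cdot c}$ for a constant $c$ while the conditional rank stays $1$, and since $h(n)\to\infty$ the ratio $\frac{h(\text{rank}_{\text{prior}})}{h(1)}$ can be made to exceed any bound—but we need precisely the exponential rate $d^\star$, not an arbitrarily large one.

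The main obstacle, and the part I would spend the most care on, is getting the right *rate*: the construction must be calibrated so that the a priori guesswork is $\Theta(h(2^{d^\star}\cdot m))$ and the conditional guesswork is $\Theta(h(m))$ for a tunable $m\to\infty$, yielding $\log\frac{h(2^{d^\star}m)}{h(m)}$, and then one must argue this tends to $d^\star$. For $h(n)=n$ (Theorem~\ref{thm:pointwiseleakage}) this is immediate since $\log\frac{2^{d^\star}m}{m}=d^\star$ exactly; for general non-decreasing $h$ with $h(n)\to\infty$ it is *not* automatic—e.g.\ $h$ could be very flat—so the argument must instead place $u^\star$ at a priori rank forming a *geometric* progression of indices and use a summation/averaging trick: build $U$ so that $\min_G\mathbb{E}[h(G(U))]\ge \frac{1}{2}h(\lceil 2^{d^\star} m\rceil)P_U(u^\star\text{-block})$ and $\min_G\mathbb{E}[h(G(U))|Y=y]\le h(m)+o(h(m))$, then take the limit. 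Handling the edge cases where the optimal conditional ordering is not strict, and verifying the Markov constraint $U-X-Y$ is respected throughout (it is, since $U$ depends on $X$ only), are the remaining routine checks. I would structure the write-up as: (i) recall the upper bound from Remark~\ref{remark2}; (ii) state the family of randomized functions $U_m$; (iii) compute $\min_G\mathbb{E}[h(G(U_m))]$ and $\min_G\mathbb{E}[h(G(U_m))|Y=y]$ up to lower-order terms; (iv) take $m\to\infty$ and conclude the ratio's logarithm converges to $D_\infty(P_X\|P_{X|Y=y})$.
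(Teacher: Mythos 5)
Your upper bound is the right idea (and the paper indeed just invokes \cite[Theorem~16]{IssaWK20} via the cost-function reduction of Remark~\ref{remark2}); but note that your parenthetical ``self-contained'' version has the key inequality reversed --- to lower-bound the denominator you need $P_{U|Y=y}(u)\ge 2^{-D_\infty(P_X\|P_{X|Y=y})}P_U(u)$ (which follows from $P_{X|Y}(x|y)\ge \bigl(\min_{x'}P_{X|Y}(x'|y)/P_X(x')\bigr)P_X(x)$), not an upper bound on $P_{U|Y=y}(u)$ in terms of $\max_x P_{X|Y=y}(x)/P_X(x)$, which is the wrong $D_\infty$.

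The real gap is the lower bound. First, your $x^\star$ is defined as $\arg\max_x P_{X|Y=y}(x)/P_X(x)$, but $D_\infty(P_X\|P_{X|Y=y})=\max_x\log\bigl(P_X(x)/P_{X|Y}(x|y)\bigr)$, so the relevant symbol is $x^\star\in\arg\max_x P_X(x)/P_{X|Y}(x|y)$ --- the one whose relative mass \emph{drops} after observing $y$, not the one whose mass rises. Second, and more fundamentally, you are trying to engineer a \emph{rank inversion}: a symbol that sits late in the a priori guessing order but first in the conditional order. That mechanism does not have to (and in general cannot) hit the exact rate $D_\infty$, and you correctly sense the trouble with flat $h$ and with calibrating ranks. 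The paper's construction avoids all of this by \emph{not} inverting any ranks. It shatters $x^\star$ into $m$ equiprobable copies, so each copy has probability $P_X(x^\star)/m$ a priori and $P_{X|Y}(x^\star|y)/m$ a posteriori; for $m$ large these copies are the \emph{least} likely symbols in \emph{both} orderings, hence occupy the \emph{same} final $m$ guessing slots in both. The expected cost contributed by the shattered block is therefore $\tau\cdot P_X(x^\star)$ in the numerator and $\tau\cdot P_{X|Y}(x^\star|y)$ in the denominator, with the \emph{identical} per-copy cost $\tau=\tfrac{1}{m}\sum_{i=|\mathcal{X}|}^{|\mathcal{X}|-1+m}h(i)$. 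Since $h$ is non-decreasing and $h(n)\to\infty$, a one-line Ces\`aro argument gives $\tau\to\infty$, so the $O(1)$ contributions from the unshattered singletons wash out and the ratio tends to $P_X(x^\star)/P_{X|Y}(x^\star|y)$ exactly, with no dependence on the growth rate of $h$. This is precisely the cancellation your geometric-progression workaround was trying (and failing) to manufacture; without it, your construction would need to estimate $h(2^{d^\star}m)/h(m)$, which need not converge to $2^{d^\star}$ for general $h$.
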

\begin{remark}
    In relation to \eqref{eqn:maximalcostleaakgeissa}, Theorem~\ref{thm:pointwiseleakagewithh} shows the optimality of the cost function corresponding to guessing number. Specifically, recall that $\mathcal{L}^\text{pc}(X\rightarrow y)$ in \eqref{eqn:maximalcostleaakgeissa} considers supremum over all cost functions $d$ (and also $\hat{\mathcal{U}}$) and its closed-form expression is given by the RHS of \eqref{eqn:upperboundonpwmaxLGE}. Theorem~\ref{thm:pointwiseleakagewithh} shows that this supremum in \eqref{eqn:maximalcostleaakgeissa} is achieved by a class of cost functions in \eqref{eqn:guessascost} corresponding to guessing number that are more operationally motivated than the counterintuitive cost function $d(u,\hat{u})=\frac{1}{P_U(u)}\mathbbm{1}\{u=\hat{u}\}$ (see Footnote~2) which is shown to achieve the supremum in \cite[Theorem~16]{IssaWK20}.
\end{remark}
\begin{remark}
    An important example of a function $h$ satisfying the conditions in Theorem~\ref{thm:pointwiseleakagewithh} is $h(n)=n^\rho$, $\rho\in(0,\infty)$, which corresponds to moments of guessing number first studied by Arikan~\cite{arikan1996inequality}. Also, note that Theorem~\ref{thm:pointwiseleakagewithh} with $h(n)=n^\rho$ recovers Theorem~\ref{thm:pointwiseleakage} when $\rho=1$. Some other examples of $h(n)$ that satisfy the conditions in Theorem~\ref{thm:pointwiseleakagewithh} are 
    \begin{align}
        \log{n}, \  \frac{\mathsf{e}^n}{n+1},\ a^n\ \text{where}\ a>1.
    \end{align}
\end{remark}
\begin{remark}
For a $y\in\text{supp}(Y)$, if there exists an $x^*\in\mathcal{X}$ such that $P_X(x^*)>0$ and $P_{X|Y}(x^*|y^*)=0$, then $D_\infty(P_X\|P_{X|Y=y})=\infty$. Theorem~\ref{thm:pointwiseleakagewithh} implies that $\mathcal{L}^{h(G)\text{-pw}}(X\rightarrow y)$ is also equal to infinity for such distributions. In fact, such distributions completely characterize the set of all distributions for which $\mathcal{L}^{h(G)\text{-pw}}(X\rightarrow y)=\infty$.
\end{remark}
\begin{proof}[Proof sketch of Theorem~\ref{thm:pointwiseleakagewithh}]
    The upper bound follows analogous to the discussion corresponding to \eqref{eqn:upperboundonpwmaxLGE} in Remark~\ref{remark2}. We use the `shattering' conditional distribution $P_{U|X}$~\cite[Proof of Theorem~1]{IssaWK20},\cite[Proof of Theorem~5]{LiaoSKC20} to prove the lower bound. A detailed proof is given in 
    \if \extended 1%
    Appendix~\ref{Proof:thm:pointwiseleakagewithh}.
    \fi
    \if \extended 0%
    an extended version~\cite[Appendix~A]{kurriMP24}.
    \fi
\end{proof}
\begin{corollary}[Guesswork and differential privacy]\label{Corollary:pwleakagewithDP}
For any conditional distribution $P_{Y|X}$ with $X$ and $Y$ taking values in finite alphabets $\mathcal{X}$ and $\mathcal{Y}$,  
\begin{align}
    \max_{P_X}\max_{y\in\emph{supp}(Y)}  \mathcal{L}^{G\emph{-pw}}(X\rightarrow y)=\max_{\substack{x,x^\prime\in\mathcal{X}\\ y\in\mathcal{Y}}}\log{\frac{P_{Y|X}(y|x)}{P_{Y|X}(y|x^\prime)}},
\end{align}
where the expression in RHS corresponds to the leakage measure in terms of local differential privacy for $P_{Y|X}$~\cite{kasiviswanathan2011can}.
\end{corollary}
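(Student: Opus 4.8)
\noindent\textbf{Proof proposal for Corollary~\ref{Corollary:pwleakagewithDP}.}
The plan is to use Theorem~\ref{thm:pointwiseleakage} together with Bayes' rule to reduce the left-hand side to a quantity depending only on the channel $P_{Y|X}$, and then to show that two-point input distributions are extremal. By Theorem~\ref{thm:pointwiseleakage}, $\mathcal{L}^{G\text{-pw}}(X\to y)=D_\infty(P_X\|P_{X|Y=y})$. I would first rewrite, for any $P_X$ and any $y\in\text{supp}(Y)$,
\begin{align*}
D_\infty(P_X\|P_{X|Y=y})=\max_{x:\,P_X(x)>0}\log\frac{P_X(x)}{P_{X|Y}(x|y)}=\log\frac{P_Y(y)}{\min_{x:\,P_X(x)>0}P_{Y|X}(y|x)},
\end{align*}
using that $P_X(x)/P_{X|Y}(x|y)=P_Y(y)/P_{Y|X}(y|x)$ whenever $P_X(x)>0$ and $P_Y(y)>0$.

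For the upper bound, I would bound $P_Y(y)=\sum_{x'}P_X(x')P_{Y|X}(y|x')\le\max_{x'\in\mathcal{X}}P_{Y|X}(y|x')$ (it is a convex combination) and observe that $\min_{x:\,P_X(x)>0}P_{Y|X}(y|x)\ge\min_{x\in\mathcal{X}}P_{Y|X}(y|x)$; combining these gives $D_\infty(P_X\|P_{X|Y=y})\le\log\big(\max_{x'}P_{Y|X}(y|x')/\min_{x}P_{Y|X}(y|x)\big)$ for every $P_X$ and $y$, which is at most the right-hand side (which is symmetric in $x,x'$). For the matching lower bound, I would pick $(x^\star,x^{\prime\star},y^\star)$ attaining the right-hand side, with $P_{Y|X}(y^\star|x^{\prime\star})\ge P_{Y|X}(y^\star|x^\star)$ without loss of generality (the case $x^\star=x^{\prime\star}$ makes the right-hand side $0$ and is trivial), and consider the input $P_X^{(\delta)}$ supported on $\{x^\star,x^{\prime\star}\}$ with $P_X^{(\delta)}(x^\star)=\delta$; note $y^\star\in\text{supp}(Y)$ under $P_X^{(\delta)}$ since $P_{Y|X}(y^\star|x^{\prime\star})>0$ in all non-trivial cases. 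Keeping only the $x=x^\star$ term in the maximum, the displayed expression evaluated at $P_X^{(\delta)}$ is at least $\log\!\big(\delta+(1-\delta)P_{Y|X}(y^\star|x^{\prime\star})/P_{Y|X}(y^\star|x^\star)\big)$, which converges to the right-hand side as $\delta\to0^+$ and already equals $+\infty$ when $P_{Y|X}(y^\star|x^\star)=0$. Taking the supremum over $P_X$ closes the gap.

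Finally I would observe that the resulting expression is exactly the parameter in the definition of $\varepsilon$-local differential privacy of the mechanism $P_{Y|X}$~\cite{kasiviswanathan2011can}, giving the stated connection. The two main inequalities are routine; the only delicate points are the edge cases in which some $P_{Y|X}(y|x)=0$ (where both sides are $+\infty$ and one must check the conventions are consistent) and the fact that the extremal input is only approached in the limit $\delta\to0^+$, so the ``$\max$'' over $P_X$ should be read as a supremum. I expect this bookkeeping, rather than the core argument, to be the only mild obstacle.
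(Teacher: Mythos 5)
Your proposal is correct and reaches the same result via a genuinely different route. Both you and the paper begin the same way, invoking Theorem~\ref{thm:pointwiseleakage} (or its generalization) to reduce $\mathcal{L}^{G\text{-pw}}(X\to y)$ to $D_\infty(P_X\|P_{X|Y=y})$. After that, however, the paper simply cites an existing result, namely \cite[Corollary~7]{IssaWK20}, which already establishes $\max_{P_X}\max_{y\in\text{supp}(Y)}D_\infty(P_X\|P_{X|Y=y})=\max_{x,x',y}\log\frac{P_{Y|X}(y|x)}{P_{Y|X}(y|x')}$; the corollary thus follows by substitution. You instead reprove that equality from scratch: you apply Bayes' rule to write $D_\infty$ in terms of the channel, derive the upper bound via the convex-combination bound on $P_Y(y)$, and establish the matching lower bound by a two-point input $P_X^{(\delta)}$ supported on the extremal pair $\{x^\star,x^{\prime\star}\}$ and letting $\delta\to 0^+$. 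Your argument is self-contained and exposes the mechanism behind the cited result, which is instructive. What the paper's approach buys is brevity and the automatic inheritance of any edge-case conventions already settled in \cite{IssaWK20}; what yours buys is independence from that reference. You correctly flag the one delicate point: since $x^\star$ must stay in $\text{supp}(P_X)$, the extremal input is only approached in the limit $\delta\to 0^+$, so the $\max_{P_X}$ on the left-hand side should be read as a supremum (except in the degenerate case $P_{Y|X}(y^\star|x^\star)=0$, where both sides are $+\infty$ for any positive $\delta$). This matches the convention implicit in the cited result.
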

 The proof of Corollary~\ref{Corollary:pwleakagewithDP} follows from Theorem~\ref{thm:pointwiseleakagewithh} and by using \cite[Corollary~7]{IssaWK20} which connects the R\'{e}nyi divergence of order infinity in RHS of \eqref{eqn:pointwiseleakagewithh1} with local differential privacy. Corollary~\ref{Corollary:pwleakagewithDP} provides an operational interpretation to local differential privacy in terms of information leakage using guesswork. 

We show that the upper bound in \eqref{eqn:upperboundonmaxLGE} is not tight in general, through an example. In particular, we characterize the maximal guesswork leakage for a binary erasure source.
\begin{theorem}[Maximal guesswork leakage for the binary erasure source]\label{theorem:BES}
    Consider a binary erasure source on $\{0,1\}\times \{0,\emph{e},1\}$ with joint distribution given by
    \begin{subequations}
    \begin{align}
        P_{XY}(i,i)&=\frac{1-p}{2}, i\in\{0,1\}\label{eqn:erasuresource1}.\\
        P_{XY}(i,\emph{e})&=\frac{p}{2}, i\in\{0,1\}\label{eqn:erasuresource2},
    \end{align}
     \end{subequations}
    where $p\in[0,1)$. Then maximal guesswork leakage is given by
    \begin{align}\label{MLGE-BES}
         \mathcal{L}^{G}(X\rightarrow Y)=\log{\frac{2}{1+p}}.
    \end{align}
\end{theorem}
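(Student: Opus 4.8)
The plan is to reduce the supremum over randomized functions to an optimization over a pair of distributions, bound the resulting expression using a subadditivity property of guesswork, and then match the bound with an explicit construction. For a distribution $Q$ on a finite set write $\mathsf{g}(Q)$ for its guesswork, i.e.\ $\min_G\mathbb{E}[G(Z)]$ with $Z\sim Q$. First I would exploit the special structure of the binary erasure source: since $X$ is uniform on $\{0,1\}$, every $U$ with $U-X-Y$ is determined by the pair $Q_0:=P_{U|X=0}$ and $Q_1:=P_{U|X=1}$ on a common finite alphabet, with marginal $P_U=\tfrac12(Q_0+Q_1)$. The realization $Y=0$ forces $X=0$, $Y=1$ forces $X=1$, and $Y=\mathrm{e}$ leaves $X$ uniform, so $P_{U|Y=0}=Q_0$, $P_{U|Y=1}=Q_1$, and $P_{U|Y=\mathrm{e}}=P_U$. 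Optimizing $G$ and each $G_y$ in \eqref{eqn:maxLviaGE} separately, the numerator becomes $\mathsf{g}(P_U)$ and the denominator becomes $\tfrac{1-p}{2}\bigl(\mathsf{g}(Q_0)+\mathsf{g}(Q_1)\bigr)+p\,\mathsf{g}(P_U)$, so $\mathcal{L}^{G}(X\to Y)$ equals the supremum over all pairs $(Q_0,Q_1)$ of $\log$ of $\mathsf{g}(P_U)\big/\bigl(\tfrac{1-p}{2}(\mathsf{g}(Q_0)+\mathsf{g}(Q_1))+p\,\mathsf{g}(P_U)\bigr)$. A small point to verify here is that padding the alphabet of $U$ with zero-probability symbols changes none of these guesswork quantities, so this reformulation is exact.

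For the upper bound I would observe that this ratio is increasing in $\mathsf{g}(P_U)$ and that, since $p<1$, a one-line manipulation shows it is at most $\tfrac{2}{1+p}$ exactly when $\mathsf{g}(P_U)\le \mathsf{g}(Q_0)+\mathsf{g}(Q_1)$. So the whole theorem reduces to proving the subadditivity bound $\mathsf{g}\bigl(\tfrac12(Q_0+Q_1)\bigr)\le \mathsf{g}(Q_0)+\mathsf{g}(Q_1)$ for arbitrary $Q_0,Q_1$ on a common finite alphabet. I would establish this constructively by interleaving the two optimal guessing orders: guess the $i$-th most likely symbol of $Q_0$ at step $2i-1$ and the $i$-th most likely symbol of $Q_1$ at step $2i$, skipping any symbol already guessed (to keep the guessing function injective). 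Under this generally suboptimal order, a symbol of $Q_0$-rank $i$ is found by step $\le 2i-1$ and a symbol of $Q_1$-rank $i$ by step $\le 2i$, so averaging the guessing number against $P_U=\tfrac12(Q_0+Q_1)$ yields at most $\mathsf{g}(Q_0)+\mathsf{g}(Q_1)$ (in fact $-\tfrac12$); since $\mathsf{g}(P_U)$ is no larger, this gives $\mathcal{L}^{G}(X\to Y)\le\log\tfrac{2}{1+p}$.

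For achievability I would produce a sequence of functions $U$ driving the ratio to $\tfrac{2}{1+p}$: take $U$ valued in $\{1,\dots,2n\}$ with $Q_0$ uniform on $\{1,\dots,n\}$ and $Q_1$ uniform on $\{n+1,\dots,2n\}$ (disjoint supports). Then $\mathsf{g}(Q_0)=\mathsf{g}(Q_1)=\tfrac{n+1}{2}$ while $P_U$ is uniform on $\{1,\dots,2n\}$ with $\mathsf{g}(P_U)=\tfrac{2n+1}{2}$, so the ratio equals $\tfrac{2n+1}{n(1+p)+1}$, which lies strictly below $\tfrac{2}{1+p}$ for every finite $n$ but increases to it as $n\to\infty$. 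Combined with the upper bound this proves \eqref{MLGE-BES}; note in particular that the supremum in Definition~\ref{definition:maxLviaGE} is not attained for this source, consistent with the strict ``$-\tfrac12$'' slack found above.

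The step I expect to be the main obstacle is the guesswork subadditivity inequality $\mathsf{g}\bigl(\tfrac12(Q_0+Q_1)\bigr)\le \mathsf{g}(Q_0)+\mathsf{g}(Q_1)$. The easy direction---concavity of $\mathsf{g}$ in the distribution, as it is a pointwise minimum of linear functionals---gives $\mathsf{g}\bigl(\tfrac12(Q_0+Q_1)\bigr)\ge\tfrac12(\mathsf{g}(Q_0)+\mathsf{g}(Q_1))$, which is the wrong direction, so the interleaving construction genuinely is needed. This analysis also makes transparent why the generic bound \eqref{eqn:upperboundonmaxLGE} is slack here: for the binary erasure source its right-hand side equals $-\log p$, which strictly exceeds $\log\tfrac{2}{1+p}$ for every $p\in[0,1)$.
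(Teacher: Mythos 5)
Your proposal is correct, and the upper bound is obtained by a genuinely different and noticeably cleaner argument than the paper's. You follow the same initial reduction: for the binary erasure source the leakage is a ratio determined by the pair $(Q_0,Q_1)=(P_{U|X=0},P_{U|X=1})$, with numerator $\mathsf{g}(P_U)$ and denominator $\tfrac{1-p}{2}(\mathsf{g}(Q_0)+\mathsf{g}(Q_1))+p\,\mathsf{g}(P_U)$, and your observation that (since $p<1$) the ratio is $\le\tfrac{2}{1+p}$ precisely when $\mathsf{g}(P_U)\le\mathsf{g}(Q_0)+\mathsf{g}(Q_1)$ is algebraically exact. Where you diverge is in how this guesswork inequality is established. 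The paper proves a stronger intermediate fact (its Claim~1): for fixed $P_U$, the infimum over decompositions $A+B=2P_U$ of $\tfrac12\gamma(A)+\tfrac12\gamma(B)$ is achieved by the alternating disjoint-support split, and this is shown via a multi-stage exchange argument (first forcing disjoint supports, then balancing support sizes, then forcing the alternating pattern). You instead bound $\mathsf{g}(P_U)$ from above directly by exhibiting the interleaved-with-skipping guessing order, for which the rank-$i$ symbol of $Q_0$ is guessed by time $2i-1$ and the rank-$i$ symbol of $Q_1$ by time $2i$; averaging against $P_U=\tfrac12(Q_0+Q_1)$ immediately yields $\mathsf{g}(P_U)\le\mathsf{g}(Q_0)+\mathsf{g}(Q_1)-\tfrac12$, which is all that is needed. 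Your proof buys simplicity and transparency (a short explicit suboptimal strategy rather than an optimality characterization), while the paper's approach yields more information, namely the exact form of the minimizing split $(A^*,B^*)$, which is not required for the theorem itself. The achievability constructions are essentially identical (two disjoint uniform halves), your observation that concavity of $\mathsf{g}$ gives the wrong-direction inequality is correct, and your note that the $-\tfrac12$ slack means the supremum is not attained is a valid remark consistent with the paper's computation (which has a harmless typo, $\tfrac{2n+1}{n(p+1)+2}$ where it should read $\tfrac{2n+2}{n(p+1)+2}$).
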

\begin{remark}
    Theorem~\ref{theorem:BES} gives a closed-form expression for maximal guesswork leakage for a binary erasure source. Note that for $p\in[0,1)$, $\log{\frac{2}{1+p}}<\log{\frac{1}{p}}$ which is the upper bound in \eqref{eqn:upperboundonmaxLGE} for the binary erasure source. Obtaining a closed-form expression for $\mathcal{L}^{G}(X\rightarrow Y)$ for an arbitrary $P_{XY}$ appears challenging. 
\end{remark}
\begin{proof}[Proof sketch of Theorem~\ref{theorem:BES}]
    We establish a more stringent upper bound through our proof, surpassing the bound in \eqref{eqn:upperboundonmaxLGE} that would result from employing maximal cost leakage. Let $\gamma(P)$ denote the guesswork of a random variable with probability distribution $P$ over $\mathcal{U}$. Using $A$ and $B$ to denote $P_{U|X=0}$ and $P_{U|X=1}$ respectively, it follows that,
\begin{align}
 &\mathcal{L}^{G}(X\rightarrow Y)\nonumber\\
       &=\sup_{P_{U|X}}\frac{1}{(1-p)\left(\frac{\frac{1}{2}\gamma(A)+\frac{1}{2}\gamma(B)}{\gamma(\frac{A+B}{2})}\right)+p}\label{eqn:erasureopt-1main}\\
       &=\sup_{P_U}\sup_{A,B\in\mathcal{P}:A+B=2P_U}\frac{1}{(1-p)\left(\frac{\frac{1}{2}\gamma(A)+\frac{1}{2}\gamma(B)}{\gamma(\frac{A+B}{2})}\right)+p}\label{eqn:erasureoptmain}\\
    &=\sup_{P_U}\frac{1}{(1-p)\left(\inf\limits_{A,B\in\mathcal{P}:A+B=2P_U}\frac{\frac{1}{2}\gamma(A)+\frac{1}{2}\gamma(B)}{\gamma(P_U)}\right)+p}\label{eqn:neweqn}
    \end{align}
    
   In the complete proof of Theorem~\ref{theorem:BES} in 
   \if \extended 0%
   the extended version~\cite{kurriMP24},
   \fi
   \if \extended 1%
   Appendix~\ref{proof:MLGE-BES},
   \fi
    we prove the following claim, which is the key ingredient of the proof of the theorem. 
   
    \begin{restatable}{claim}{restatedclaim}\label{claim}
Fix a $P_U$ over $\mathcal{U}=\{u_1,u_2,\dots,u_n\}$, where $n$ is an even number, such that $P_U(u_1)\geq P_U(u_2)\geq \dots \geq P_U(u_n)$. Then we have
\begin{align}\label{eqn:erasureproof5}
    \inf_{A,B\in\mathcal{P}:A+B=2P_U}\frac{1}{2}\gamma(A)+\frac{1}{2}\gamma(B)\nonumber\\
   \geq \sum_{i=1}^{\frac{n}{2}}i(P_U(u_{2i})+P_U(u_{2i-1})), 
\end{align}
   where $\mathcal{P}$ is the set of all probability distributions on $\mathcal{U}$.
\end{restatable}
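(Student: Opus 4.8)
The plan is to reduce Claim~\ref{claim} to a single split-independent pointwise inequality comparing the ``top-$m$ mass'' of $A$ and of $B$ with that of $P_U$, and then to finish by an elementary exchange-of-summation computation. First I would record the layer-cake representation of guesswork: for a distribution $P$ on $\mathcal U$ write $M_m(P):=\max_{S\subseteq\mathcal U,\,|S|=m}\sum_{u\in S}P(u)$ for the sum of its $m$ largest masses, with $M_0(P):=0$. Since the $k$-th largest mass of $P$ equals $M_k(P)-M_{k-1}(P)$, Abel summation gives $\gamma(P)=\sum_{k=1}^{n}\bigl(1-M_{k-1}(P)\bigr)$. Hence, for any split $A+B=2P_U$ with $A,B\in\mathcal P$,
\[
\tfrac12\gamma(A)+\tfrac12\gamma(B)=\sum_{k=1}^{n}\Bigl(1-\tfrac12\bigl(M_{k-1}(A)+M_{k-1}(B)\bigr)\Bigr),
\]
so it suffices to establish, for every $m\in\{0,1,\dots,n-1\}$, the bound
\[
M_m(A)+M_m(B)\ \le\ 2\sum_{j=1}^{\min\{2m,n\}}P_U(u_j).
\]

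For this bound I would pick sets $S$ and $T$ with $|S|=|T|=m$ attaining $M_m(A)=A(S)$ and $M_m(B)=B(T)$. Then $|S\cup T|\le 2m$, and since $A,B\ge 0$ we have $A(S)\le A(S\cup T)$ and $B(T)\le B(S\cup T)$, so $M_m(A)+M_m(B)\le A(S\cup T)+B(S\cup T)=2\,P_U(S\cup T)$ by $A+B=2P_U$. Finally $P_U(S\cup T)\le\sum_{j=1}^{\min\{2m,n\}}P_U(u_j)$, because $S\cup T$ has at most $2m$ elements and, by the assumed ordering $P_U(u_1)\ge\cdots\ge P_U(u_n)$, the heaviest $\min\{2m,n\}$ atoms of $P_U$ are exactly $u_1,\dots,u_{\min\{2m,n\}}$.

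It then remains to substitute and simplify. Set $q_i:=P_U(u_{2i-1})+P_U(u_{2i})$ for $i\in\{1,\dots,n/2\}$, so that $\sum_{i=1}^{n/2}q_i=1$. Plugging the bound into the displayed identity for $\tfrac12\gamma(A)+\tfrac12\gamma(B)$ and writing $m=k-1$, I would split the sum over $m$ according to whether $2m\le n$ or $2m>n$ — this is where evenness of $n$ enters, since the cutoff falls exactly at $m=n/2$ — so that every $m>n/2$ term vanishes once the inner sum saturates to $1$. Exchanging the order of summation in the remaining part makes the coefficient of both $P_U(u_{2i-1})$ and $P_U(u_{2i})$ equal to $n/2+1-i$, and using $\sum_i q_i=1$ the whole expression collapses to $n-\bigl[(n/2+1)-\sum_{i=1}^{n/2} i\,q_i\bigr]-(n/2-1)=\sum_{i=1}^{n/2} i\,q_i$, which is exactly the right-hand side of \eqref{eqn:erasureproof5}.

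The step I expect to be the crux is the pointwise bound $M_m(A)+M_m(B)\le 2\sum_{j=1}^{\min\{2m,n\}}P_U(u_j)$: the right object to compare against is the \emph{union} $S\cup T$ of the two individually optimal sets, not a single common set of size $m$ or $2m$, and it is precisely this choice that makes the subsequent arithmetic telescope, via $\sum_i q_i=1$ and the parity of $n$, into the stated closed form. I would also note in passing that this inequality is not tight for every $P_U$ (e.g.\ it is loose when $P_U$ is uniform on two atoms), but it is the bound that survives the infimum over all splits $A+B=2P_U$ and ultimately drives the proof of Theorem~\ref{theorem:BES}.
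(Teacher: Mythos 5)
Your proof is correct, but it takes a genuinely different route from the paper's. The paper relaxes the constraint to let $A,B$ be arbitrary nonnegative vectors summing to $2P_U$, extends $\gamma$ to a homogeneous functional $\bar\gamma$, and then carries out a multi-stage rearrangement argument: first moving mass so that $A$ and $B$ have disjoint supports without increasing $\tfrac12\bar\gamma(A)+\tfrac12\bar\gamma(B)$, then showing the support sizes can be taken to differ by at most one, and finally exhibiting the alternating split $A^*(u_{2i-1})=2P_U(u_{2i-1})$, $B^*(u_{2i})=2P_U(u_{2i})$ as the exact minimizer of the relaxed problem, which evaluates to the stated right-hand side. You instead prove the inequality directly: you rewrite $\gamma(P)=\sum_{k=1}^n\bigl(1-M_{k-1}(P)\bigr)$ via Abel summation and then establish the clean pointwise bound $M_m(A)+M_m(B)\le 2\sum_{j=1}^{\min\{2m,n\}}P_U(u_j)$ by comparing against the union $S\cup T$ of the two optimal $m$-element sets; summing over $m$ and exchanging the order of summation gives $\sum_{i=1}^{n/2}i\bigl(P_U(u_{2i-1})+P_U(u_{2i})\bigr)$ exactly (I checked the telescoping: the coefficient of $q_i:=P_U(u_{2i-1})+P_U(u_{2i})$ in $\tfrac{n}{2}-\sum_{m=0}^{n/2-1}\sum_{j\le 2m}P_U(u_j)$ is $\tfrac{n}{2}-(\tfrac{n}{2}-i)=i$). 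Your argument is substantially shorter and avoids the case-by-case swap bookkeeping, at the cost of not identifying the minimizer itself; the paper's constructive approach has the side benefit that the same alternating split it produces is also the one used to establish the matching lower bound (achievability) in Theorem~\ref{theorem:BES}. One small quibble: your parenthetical that the bound is loose when $P_U$ is uniform on two atoms is not quite right — in that case the infimum on the left equals $1$, matching the right-hand side — though this aside has no bearing on the correctness of the proof.
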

The intuition for Claim~\ref{claim} is that if we relax the condition in the optimization problem that $A$ and $B$ need to be probability distributions and allow them to be non-negative real vectors of length $|\mathcal{U}|$ by considering an appropriate extension of the definition of $\gamma(\cdot)$, then the optimal $A^*$ and $B^*$ for the corresponding optimization problem are given by $A^*(u_{2i})=2P_U(u_{2i})$, $B^*(u_{2i-1})=2P_U(u_{2i-1})$, for $i\in[1:\frac{n}{2}]$. 
 
 Now continuing \eqref{eqn:neweqn}, we get


\begin{align}
&\mathcal{L}^{G}(X\rightarrow Y)\nonumber\\
    &\leq\sup_{P_U}\frac{1}{(1-p)\left(\frac{\sum_{i=1}^{\frac{n}{2}}i(P_U(u_{2i})+P_U(u_{2i-1}))}{\sum_{i=1}^niP_U(u_i)}\right)+p}\label{eqn:erasurethmupperbnd1main}\\
    &\leq \frac{1}{(1-p)\frac{1}{2}+p}\label{eqn:erasureproof4main}\\
    &=\frac{2}{1+p},
\end{align}
where the supremum in \eqref{eqn:erasureopt-1main} is decomposed into two suprema  in \eqref{eqn:erasureoptmain} using the fact that $P_U(u)=P_{U|X}(u|0)P_X(0)+P_{U|X}(u|1)P_X(1)=\frac{A(u)+B(u)}{2}$; \eqref{eqn:erasurethmupperbnd1main} follows from Claim~\ref{claim}; \eqref{eqn:erasureproof4main} and holds because
\begin{align}
\frac{\sum_{i=1}^{\frac{n}{2}}i(P_U(u_{2i})+P_U(u_{2i-1}))}{\sum_{i=1}^niP_U(u_i)}\geq \frac{1}{2}.
\end{align}
We use the `shattering' conditional distribution $P_{U|X}$~\cite[Proof of Theorem~1]{IssaWK20},\cite[Proof of Theorem~5]{LiaoSKC20} to prove the lower bound. The complete proof is given in %
\if \extended 1%
Appendix~\ref{proof:MLGE-BES}.
\fi
\if \extended 0%
\cite[Appendix~B]{kurriMP24}
\fi
\end{proof}

\section{Oblivious Maximal $\rho$-Guesswork Leakage}
In Section~\ref{section:mgl}, we considered the setup where there are no constraints on the memory of adversary, i.e., with each new attempt, the adversary is aware of their past guesses and avoids repeating any previously incorrect ones. Here, we consider a memoryless adversary which cannot keep track of previous guesses. 
\begin{definition}[Oblivious maximal $\rho$-guesswork leakage]\label{defn:obmgl}
     Let $P_{XY}$ be a joint distribution on a finite alphabet $\mathcal{X}\times\mathcal{Y}$. The oblivious maximal $\rho$-guesswork leakage from $X$ to $Y$ is defined by
     \begin{align}
         &\mathcal{L}_\rho^{\emph{oblv}-G}(X\rightarrow Y)\nonumber\\
         &=\sup_{U:U-X-Y}\log{\frac{\inf_{P_{\hat{U}}}\mathbb{E}[V_\rho(U,\hat{U}_1^\infty)]}{\inf_{P_{\hat{U}|Y}}\sum_{y\in\mathcal{Y}}P_Y(y)\mathbb{E}[V_\rho(U,\hat{U}_1^\infty)|Y=y]}},
     \end{align}
     where $V_\rho(U,\hat{U}_1^\infty)$ is as defined in \eqref{eqn:Medard-factorialfn}.
\end{definition}
\begin{definition}[Pointwise oblivious maximal $\rho$-guesswork leakage]\label{defn:pobmgl}
     Let $P_{XY}$ be a joint distribution on a finite alphabet $\mathcal{X}\times\mathcal{Y}$. The pointwise oblivious maximal $\rho$-guesswork leakage from $X$ to $y$, for $y\in\emph{supp}(Y)$, is defined by
     \begin{align}
         &\mathcal{L}_\rho^{\emph{pw-oblv}-G}(X\rightarrow y)\nonumber\\
         &=\sup_{U:U-X-Y}\log{\frac{\inf_{P_{\hat{U}}}\mathbb{E}[V_\rho(U,\hat{U}_1^\infty)]}{\inf_{P_{\hat{U}|Y=y}}\mathbb{E}[V_\rho(U,\hat{U}_1^\infty)|Y=y]}},
     \end{align}
     where $V_\rho(U,\hat{U}_1^\infty)$ is as defined in \eqref{eqn:Medard-factorialfn}.
\end{definition}
\begin{theorem}[Oblivious maximal $\rho$-guesswork leakage]\label{thm:obmgl}
For any joint probability distribution $P_{XY}$ on a finite alphabet $\mathcal{X}\times\mathcal{Y}$, the oblivious maximal $\rho$-leakage from $X$ to $Y$, for $\rho>0$, is given by
\begin{align}\label{eqn:obmglthm}
    \mathcal{L}_\rho^{\emph{oblv}-G}(X\rightarrow Y)=\rho\sup_{P_{\tilde{X}}\ll P_{X}}I_{\frac{\rho}{1+\rho}}^{\text{A}}(\tilde{X};Y).
\end{align}
\end{theorem}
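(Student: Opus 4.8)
The plan is to evaluate in closed form the two oblivious-guesswork quantities in Definition~\ref{defn:obmgl} for a fixed post-processing $U$, and then carry out the standard ``change of prior'' reduction that turns the supremum over randomized functions of $X$ into a supremum over tilted priors, so that known results on the Arimoto capacity / maximal $\alpha$-leakage can be invoked. As a first step I would fix $P_{U|X}$ (hence $P_{UY}$ with $U-X-Y$): conditioned on $U=u$ and on a guessing distribution $P_{\hat U}$, the guessing number $G(U,\hat U_1^\infty)$ is geometric with success probability $P_{\hat U}(u)$, so the negative-binomial identity $\sum_{k\ge 1}\binom{k+\rho-1}{\rho}z^{k-1}=(1-z)^{-(\rho+1)}$ (as used in \cite{Slamatianetal19}) gives $\mathbb{E}[V_\rho(U,\hat U_1^\infty)\mid U=u]=P_{\hat U}(u)^{-\rho}$, whence $\mathbb{E}[V_\rho(U,\hat U_1^\infty)]=\sum_u P_U(u)P_{\hat U}(u)^{-\rho}$. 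The same computation performed conditionally on $Y=y$ shows that the denominator decouples across $y$ into $\sum_{y}P_Y(y)\inf_{P_{\hat U|Y=y}}\sum_u P_{U|Y}(u|y)P_{\hat U|Y}(u|y)^{-\rho}$, the minimization over $P_{\hat U|Y=y}$ being separate for each $y$.

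Next I would solve these minimizations: for any probability mass function $p$ and any $\rho>0$, Holder's inequality yields $\inf_q\sum_u p(u)q(u)^{-\rho}=\bigl(\sum_u p(u)^{1/(1+\rho)}\bigr)^{1+\rho}$, with the infimum attained at $q\propto p^{1/(1+\rho)}$. Setting $\alpha=\frac{1}{1+\rho}\in(0,1)$, a direct check against the definitions gives $\log\bigl(\sum_u p(u)^{1/(1+\rho)}\bigr)^{1+\rho}=\rho H_\alpha(\cdot)$; applying this with $p=P_U$ and, separately, with $p=P_{U|Y=y}$ for each $y$, the logarithm of the numerator is $\rho H_\alpha(U)$ and that of the denominator is $\rho H_\alpha^{\text{A}}(U|Y)$. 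Hence the quantity inside the supremum in Definition~\ref{defn:obmgl} equals $\rho\bigl(H_\alpha(U)-H_\alpha^{\text{A}}(U|Y)\bigr)=\rho\,I_\alpha^{\text{A}}(U;Y)$, so that $\mathcal{L}_\rho^{\emph{oblv}-G}(X\rightarrow Y)=\rho\sup_{U:\,U-X-Y}I_\alpha^{\text{A}}(U;Y)$.

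It then remains to reduce $\sup_{U:\,U-X-Y}I_\alpha^{\text{A}}(U;Y)$ to a supremum over priors $P_{\tilde X}\ll P_X$ and thereby reach the right-hand side of \eqref{eqn:obmglthm}. For the lower bound, given any $P_{\tilde X}\ll P_X$ I would use the ``shattering'' conditional distribution of \cite[Proof of Theorem~1]{IssaWK20}, \cite[Proof of Theorem~5]{LiaoSKC20} to produce a sequence of channels $P_{U|X}$ with $U-X-Y$ and $I_\alpha^{\text{A}}(U;Y)\to I_\alpha^{\text{A}}(\tilde X;Y)$. For the upper bound, I would use that for each $u$ with $P_U(u)>0$ the posterior $P_{X|U=u}$ is absolutely continuous with respect to $P_X$, together with a data-processing/convexity argument for the Arimoto mutual information along $U-X-Y$ that bounds $I_\alpha^{\text{A}}(U;Y)$, uniformly over post-processings, by the single-letter quantity at a worst-case tilted prior; this is essentially the argument underlying the closed-form expression for maximal $\alpha$-leakage in \cite{LiaoKS20}, which can be invoked here (also reconciling the normalization of the order of the Arimoto mutual information so that it matches \eqref{eqn:obmglthm}). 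The hard part is precisely this last reduction: making the converse data-processing inequality for Arimoto mutual information precise and uniform over all post-processings $U$ in the relevant range of orders (equivalently, for all $\rho>0$), and checking that the shattering construction, tailored to the plain maximal-leakage functional, carries over to the Arimoto functional. The first two steps above are routine --- the negative-binomial identity, Holder's inequality, and matching against the definitions of $H_\alpha$ and $H_\alpha^{\text{A}}$.
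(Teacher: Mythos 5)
Your proposal takes essentially the same route as the paper: both reduce $\mathcal{L}_\rho^{\text{oblv}-G}(X\to Y)$ to $\rho\,\sup_{U:U-X-Y} I^{\text{A}}_{1/(1+\rho)}(U;Y)$ via the Salamatian et al.\ characterization of oblivious $\rho$-th guessing moments, and then invoke \cite[Theorem~5]{LiaoSKC20} to turn the supremum over post-processings into a supremum over tilted priors $P_{\tilde X}\ll P_X$. The only substantive difference is that you re-derive the identities $\inf_{P_{\hat U}}\log\mathbb{E}[V_\rho]=\rho H_{1/(1+\rho)}(U)$ and its conditional analogue from the negative-binomial identity plus H\"older, whereas the paper cites them wholesale from \cite[Lemma~2 and (28)]{Slamatianetal19}, and the ``hard part'' you flag at the end (converse/shattering reduction with the order reconciliation from $\tfrac{1}{1+\rho}$ to $\tfrac{\rho}{1+\rho}$) is in the paper likewise disposed of by a black-box citation to \cite[Theorem~5]{LiaoSKC20} rather than by a fresh argument.
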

\begin{remark}
    We note that the right-hand-side of \eqref{eqn:obmglthm} is proportional to maximal $\alpha$-leakage~\cite{LiaoSKC20}, a generalization of maximal leakage, with $\alpha=\frac{\rho}{1+\rho}$. This provides a new operational interpretation to maximal $\alpha$-leakage in terms of guesswork.
\end{remark}
\begin{proof}[Proof sketch of Theorem~\ref{thm:obmgl}]
    Using \cite[Lemma~2 and (28)]{Slamatianetal19}, we show that 
    \begin{align}
        \mathcal{L}_\rho^{\text{oblv}-G}(X\rightarrow Y)=\rho \sup_{U:U-X-Y}I^{\text{A}}_{\frac{1}{1+\rho}}(U;Y).
    \end{align}
    Invoking \cite[Theorem~5]{LiaoSKC20} which states that
    \begin{align}
        \sup_{U:U-X-Y}I^{\text{A}}_{\frac{1}{1+\rho}}(U;Y)=\sup_{P_{\tilde{X}}\ll P_{X}}I_{\frac{\rho}{1+\rho}}^{\text{A}}(\tilde{X};Y)
    \end{align}
    completes the proof. A detailed proof is given in %
    \if \extended 1%
    Appendix~\ref{proof:obgml}.
    \fi
    \if \extended 0%
    \cite[Appendix~C]{kurriMP24}.
    \fi
\end{proof}
\begin{theorem}[Pointwise oblivious maximal $\rho$-guesswork leakage]\label{thm:pwrhoguesleak}
For any joint probability distribution $P_{XY}$ on a finite alphabet $\mathcal{X}\times\mathcal{Y}$, the oblivious maximal $\rho$-leakage from $X$ to $y$, for $\rho>0$, is given by
\begin{align}
    \mathcal{L}_\rho^{\emph{pw-oblv}-G}(X\rightarrow y)=D_\infty(P_{X}\| P_{X|Y=y}).
\end{align}
\end{theorem}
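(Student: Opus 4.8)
The plan is to first rewrite the ratio of oblivious $\rho$-guesswork quantities as a ratio of R\'enyi $\alpha$-power sums with $\alpha=\tfrac{1}{1+\rho}$, and then to bound the supremum over $U$ exactly as in the proof of Theorem~\ref{thm:pointwiseleakagewithh}. For the reduction: fix $U$ with $U-X-Y$ and a guessing distribution $P_{\hat U}$; conditioned on $U=u$ the oblivious guessing number is geometric with parameter $P_{\hat U}(u)$, so $\mathbb{E}\bigl[V_\rho(U,\hat U_1^\infty)\mid U=u\bigr]=P_{\hat U}(u)^{-\rho}$ and $\mathbb{E}[V_\rho(U,\hat U_1^\infty)]=\sum_u P_U(u)P_{\hat U}(u)^{-\rho}$; minimizing over $P_{\hat U}$ (optimizer $P_{\hat U}\propto P_U^{\alpha}$) gives $\inf_{P_{\hat U}}\mathbb{E}[V_\rho(U,\hat U_1^\infty)]=\bigl(\sum_u P_U(u)^{\alpha}\bigr)^{1/\alpha}$, which is \cite[Lemma~2 and (28)]{Slamatianetal19}. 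Applying the same identity to $P_{U|Y=y}$ in the denominator, one reaches
\begin{align}
\mathcal{L}_\rho^{\text{pw-oblv}-G}(X\rightarrow y)=\frac{1}{\alpha}\sup_{P_{U|X}}\log\frac{\sum_{u}P_U(u)^{\alpha}}{\sum_{u}P_{U|Y=y}(u)^{\alpha}},
\end{align}
with $P_U(u)=\sum_x P_{U|X}(u|x)P_X(x)$ and $P_{U|Y=y}(u)=\sum_x P_{U|X}(u|x)P_{X|Y=y}(x)$. Already here the $\rho$-dependence disappears, since numerator and denominator of the original ratio are the same power $1/\alpha$ of a R\'enyi $\alpha$-power sum.

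For the upper bound I would set $d:=2^{D_\infty(P_X\|P_{X|Y=y})}=\max_x P_X(x)/P_{X|Y=y}(x)$ (nothing to prove if $d=\infty$). Then $P_X(x)\le d\,P_{X|Y=y}(x)$ for all $x$ forces $P_U(u)\le d\,P_{U|Y=y}(u)$ for all $u$, and monotonicity of $t\mapsto t^{\alpha}$ gives $\sum_u P_U(u)^{\alpha}\le d^{\alpha}\sum_u P_{U|Y=y}(u)^{\alpha}$; substituting into the displayed identity yields $\mathcal{L}_\rho^{\text{pw-oblv}-G}(X\rightarrow y)\le \tfrac1\alpha\log d^{\alpha}=D_\infty(P_X\|P_{X|Y=y})$.

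For the matching lower bound I would use a ``shattering'' $P_{U|X}$ (cf.\ \cite[Proof of Theorem~1]{IssaWK20},\cite[Proof of Theorem~5]{LiaoSKC20}) specialized to the single $y$: pick $x^\star\in\argmax_x P_X(x)/P_{X|Y=y}(x)$ with $P_X(x^\star)>0$, and for $M\in\mathbb{N}$ let $U$ take values in $\{1,\dots,M\}\cup\{\star\}$ with $P_{U|X}(\cdot\mid x^\star)$ uniform on $\{1,\dots,M\}$ and $P_{U|X}(\star\mid x)=1$ for all $x\ne x^\star$. Then on $\{1,\dots,M\}$ we have $P_U(j)=P_X(x^\star)/M=d\,P_{U|Y=y}(j)$ while the leftover mass of both $P_U$ and $P_{U|Y=y}$ sits on the single atom $\star$, so $\sum_u P_U(u)^{\alpha}=M^{1-\alpha}P_X(x^\star)^{\alpha}+(1-P_X(x^\star))^{\alpha}$ and likewise for $P_{U|Y=y}$; since $1-\alpha>0$ the $M^{1-\alpha}$ terms dominate, and the ratio tends to $(P_X(x^\star)/P_{X|Y=y}(x^\star))^{\alpha}=d^{\alpha}$ as $M\to\infty$. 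Hence the supremum is at least $\tfrac1\alpha\log d^{\alpha}=D_\infty(P_X\|P_{X|Y=y})$. (If $P_{X|Y=y}(x^\star)=0$, so $D_\infty=\infty$, the same construction makes the ratio at least $M^{1-\alpha}P_X(x^\star)^{\alpha}\to\infty$.) Combining with the upper bound, and observing that $U\perp X$ already yields leakage $0$ (consistent with $D_\infty\ge 0$), completes the proof.

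I do not expect a genuine obstacle: once the reduction is in place the argument is literally the pointwise-leakage argument of Theorem~\ref{thm:pointwiseleakagewithh}, with the R\'enyi $\alpha$-power sum playing the role of the $\rho$-th moment of the guessing number used there. The only points needing care are (i) justifying the closed form in the reduction (the geometric moment identity, the H\"older/Lagrange optimization, and in particular the cancellation of $\rho$ that makes the answer $\rho$-free), and (ii) the limiting shattering construction and the separate handling of the $D_\infty=\infty$ case.
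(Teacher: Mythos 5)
Your proof is correct and follows essentially the same route as the paper: the reduction to a ratio of R\'enyi $\alpha$-power sums is the paper's Lemma (from Salamatian \emph{et al.}), your upper bound via $P_U(u)\le d\,P_{U|Y=y}(u)$ is just a Bayes-rule rephrasing of the paper's $\min_x P_{Y|X}(y|x)/P_Y(y)$ argument, and your lower bound is the same shattering construction, merely specialized from the outset to $m_{x^\star}=M$, $m_x=1$ otherwise (the paper keeps all $m_x$ free and only at the end passes to the Dirac on $x^\star$). A minor cosmetic difference is that you merge the singleton blocks $\mathcal{U}_x$, $x\ne x^\star$, into one atom $\star$, which changes the residual terms from $\sum_{x\ne x^\star}P_X(x)^\alpha$ to $(1-P_X(x^\star))^\alpha$ but not the $M\to\infty$ limit.
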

\begin{proof}[Proof sketch of Theorem~\ref{thm:pwrhoguesleak}]
    Using \cite[Lemma~2 and (28)]{Slamatianetal19}, we show that 
    \begin{align}
        \mathcal{L}_\rho^{\text{pw-oblv}-G}&(X\rightarrow y)\nonumber\\
        &=\sup_{U:U-X-Y}\log\frac{\left(\sum_{u\in\mathcal{U}}P_U(u)^\alpha\right)^\frac{1}{\alpha}}{\left(\sum_{u\in\mathcal{U}}P_{U|Y}(u|y)^\alpha\right)^\frac{1}{\alpha}},
    \end{align}
    for $\alpha=\frac{1}{1+\rho}\leq 1$. We then show that
    \begin{align}\label{eqn:thm5assert1}
      \sup_{U:U-X-Y}\log\frac{\left(\sum_{u\in\mathcal{U}}P_U(u)^\alpha\right)^\frac{1}{\alpha}}{\left(\sum_{u\in\mathcal{U}}P_{U|Y}(u|y)^\alpha\right)^\frac{1}{\alpha}}=\max_{x\in\mathcal{X}}\frac{P_X(x)}{P_{X|Y}(x|y)}.
    \end{align}
    The proof of the upper bound in \eqref{eqn:thm5assert1} is in spirit along the lines of \cite[Proposition~5]{IssaWK20}. We use the `shattering' conditional distribution $P_{U|X}$~\cite[Proof of Theorem~1]{IssaWK20},\cite[Proof of Theorem~5]{LiaoSKC20} to prove the lower bound. A detailed proof is given in %
    \if \extended 1%
    Appendix~\ref{proof:pwrhoguesleak}.
    \fi
    \if \extended 0%
    \cite[Appendix~D]{kurriMP24}.
    \fi
\end{proof}

\section{Acknowledgment}
Gowtham R. Kurri would like to thank Oliver Kosut and Lalitha Sankar for helpful discussions that contributed to the proof of \eqref{eqn:thm5assert1}.

\IEEEtriggeratref{11}

\bibliographystyle{IEEEtran}
\bibliography{Bibliography}

\begin{thebibliography}{10}
\providecommand{\url}[1]{#1}
\csname url@samestyle\endcsname
\providecommand{\newblock}{\relax}
\providecommand{\bibinfo}[2]{#2}
\providecommand{\BIBentrySTDinterwordspacing}{\spaceskip=0pt\relax}
\providecommand{\BIBentryALTinterwordstretchfactor}{4}
\providecommand{\BIBentryALTinterwordspacing}{\spaceskip=\fontdimen2\font plus
\BIBentryALTinterwordstretchfactor\fontdimen3\font minus
  \fontdimen4\font\relax}
\providecommand{\BIBforeignlanguage}[2]{{%
\expandafter\ifx\csname l@#1\endcsname\relax
\typeout{** WARNING: IEEEtran.bst: No hyphenation pattern has been}%
\typeout{** loaded for the language `#1'. Using the pattern for}%
\typeout{** the default language instead.}%
\else
\language=\csname l@#1\endcsname
\fi
#2}}
\providecommand{\BIBdecl}{\relax}
\BIBdecl

\bibitem{smith2009foundations}
G.~Smith, ``On the foundations of quantitative information flow,'' in
  \emph{International Conference on Foundations of Software Science and
  Computational Structures}, 2009, pp. 288--302.

\bibitem{braun2009quantitative}
C.~Braun, K.~Chatzikokolakis, and C.~Palamidessi, ``Quantitative notions of
  leakage for one-try attacks,'' \emph{Electronic Notes in Theoretical Computer
  Science}, vol. 249, pp. 75--91, 2009.

\bibitem{du2012privacy}
F.~du~Pin~Calmon and N.~Fawaz, ``Privacy against statistical inference,'' in
  \emph{50th annual Allerton conference on communication, control, and
  computing (Allerton)}.\hskip 1em plus 0.5em minus 0.4em\relax IEEE, 2012, pp.
  1401--1408.

\bibitem{osia2019privacy}
S.~A. Osia, B.~Rassouli, H.~Haddadi, H.~R. Rabiee, and D.~G{\"u}nd{\"u}z,
  ``Privacy against brute-force inference attacks,'' in \emph{2019 IEEE
  International Symposium on Information Theory (ISIT)}.\hskip 1em plus 0.5em
  minus 0.4em\relax IEEE, 2019, pp. 637--641.

\bibitem{IssaWK20}
I.~Issa, A.~B. Wagner, and S.~Kamath, ``An operational approach to information
  leakage,'' \emph{IEEE Transactions on Information Theory}, vol.~66, no.~3,
  pp. 1625--1657, 2020.

\bibitem{LiaoSKC20}
J.~Liao, L.~Sankar, O.~Kosut, and F.~P. Calmon, ``Maximal $\alpha$-leakage and
  its properties,'' in \emph{IEEE Conference on Communications and Network
  Security}, 2020, pp. 1--6.

\bibitem{farokhi2021measuring}
F.~Farokhi and N.~Ding, ``Measuring information leakage in non-stochastic
  brute-force guessing,'' in \emph{2020 IEEE Information Theory Workshop
  (ITW)}.\hskip 1em plus 0.5em minus 0.4em\relax IEEE, 2021, pp. 1--5.

\bibitem{SaeidianCOS23}
S.~Saeidian, G.~Cervia, T.~J. Oechtering, and M.~Skoglund, ``Pointwise maximal
  leakage,'' \emph{IEEE Transactions on Information Theory}, vol.~69, no.~12,
  pp. 8054--8080, 2023.

\bibitem{zarrabian2023lift}
M.~A. Zarrabian, N.~Ding, and P.~Sadeghi, ``On the lift, related privacy
  measures, and applications to privacy--utility trade-offs,'' \emph{Entropy},
  vol.~25, no.~4, p. 679, 2023.

\bibitem{massey1994guessing}
J.~L. Massey, ``Guessing and entropy,'' in \emph{Proceedings of 1994 IEEE
  International Symposium on Information Theory}, 1994, p. 204.

\bibitem{pliam1998disparity}
J.~O. Pliam, ``The disparity between work and entropy in cryptology.''
  \emph{IACR Cryptol. ePrint Arch.}, vol. 1998, p.~24, 1998.

\bibitem{arikan1996inequality}
E.~Arikan, ``An inequality on guessing and its application to sequential
  decoding,'' \emph{IEEE Transactions on Information Theory}, vol.~42, no.~1,
  pp. 99--105, 1996.

\bibitem{HanawalS12}
M.~J. {Hanawal} and R.~{Sundaresan}, ``Randomized attacks on passwords,'' in
  \emph{DRDO-IISc Programme on Advanced Research in Mathematical Engineering},
  2010.

\bibitem{boztas2012oblivious}
S.~Boztas, ``Oblivious distributed guessing,'' in \emph{IEEE International
  Symposium on Information Theory Proceedings}, 2012, pp. 2161--2165.

\bibitem{HuleihelSM17}
W.~Huleihel, S.~Salamatian, and M.~Médard, ``Guessing with limited memory,''
  in \emph{IEEE International Symposium on Information Theory (ISIT)}, 2017,
  pp. 2253--2257.

\bibitem{Slamatianetal19}
S.~Salamatian, W.~Huleihel, A.~Beirami, A.~Cohen, and M.~Médard, ``Why botnets
  work: Distributed brute-force attacks need no synchronization,'' \emph{IEEE
  Transactions on Information Forensics and Security}, vol.~14, no.~9, pp.
  2288--2299, 2019.

\bibitem{arimoto1977information}
S.~Arimoto, ``Information measures and capacity of order $\alpha$ for discrete
  memoryless channels,'' \emph{Topics in information theory}, 1977.

\bibitem{LiaoKS20}
J.~{Liao}, O.~{Kosut}, L.~{Sankar}, and F.~P. {Calmon}, ``Tunable measures for
  information leakage and applications to privacy-utility tradeoffs,''
  \emph{IEEE Transactions on Information Theory}, vol.~65, no.~12, pp.
  8043--8066, 2019.

\bibitem{renyi1961measures}
A.~R{\'e}nyi, ``On measures of entropy and information,'' in \emph{Proceedings
  of the Fourth Berkeley Symposium on Mathematical Statistics and Probability,
  Volume 1: Contributions to the Theory of Statistics}, vol.~4, 1961, pp.
  547--562.

\bibitem{kasiviswanathan2011can}
S.~P. Kasiviswanathan, H.~K. Lee, K.~Nissim, S.~Raskhodnikova, and A.~Smith,
  ``What can we learn privately?'' \emph{SIAM Journal on Computing}, vol.~40,
  no.~3, pp. 793--826, 2011.

\end{thebibliography}

\clearpage

\if \extended 1%
\appendices

\section{Proof of Theorem~\ref{thm:pointwiseleakagewithh}}\label{Proof:thm:pointwiseleakagewithh}
The proof of the upper bound follows from \cite[Theorem~16]{IssaWK20} by noting that $h(G(u))$ can be seen as a special case of $d(u,\hat{u})$ as mentioned in the paragraph before Theorem~\ref{thm:pointwiseleakagewithh}. In particular, this follows by defining $\hat{\mathcal{U}}$ as the set of all permutations of $\mathcal{U}$ and 
\begin{align}
    d(u,\hat{u})=\sum_{i=1}^{|\mathcal{U}|}h(i)\mathbbm{1}\{u=\hat{u}_i\},
\end{align}
noting that
\begin{align}
    \mathbb{E}[d(u,\hat{u})]=\sum_{i=1}^{|\mathcal{U}|}h(i)P_U(\hat{u}_i)=\mathbb{E}[h(G(U))],
\end{align}
for a guessing strategy inherently determined by permutation $\hat{u}$.

We prove the lower bound now. For this, we use the `shattering' conditional distribution $P_{U|X}$~\cite[Proof of Theorem~1]{IssaWK20},\cite[Proof of Theorem~5]{LiaoSKC20}. Let $\mathcal{U}=\cup_{x\in\mathcal{X}}\mathcal{U}_x$ (a disjoint union) and $|\mathcal{U}_x|=m_x$, for $x\in\mathcal{X}$. Define
\begin{align}
P_{U|X}(u|x)=\begin{cases}\frac{1}{m_x}, & u\in\mathcal{U}_x\\
0, & \text{otherwise}.
\end{cases}
\end{align}
Fix an $x^*\in\argmax_{x\in\mathcal{X}}\frac{P_X(x)}{P_{X|Y}(x|y)}$ and let $m_{x}=1$, for $x\neq x^*$. This gives,
\begin{align}
    P_U(u)=\begin{cases}
        P_X(x), & u\in\mathcal{U}_x, x\neq x^*\\
        \frac{P_X(x^*)}{m_{x^*}}, & u\in\mathcal{U}_{x^*}.
    \end{cases}
\end{align}
We denote the optimal guessing strategies in both the numerator and the denominator in \eqref{eqn:pwleakagewithh} by $G^*(\cdot)$ and $G_y^*(\cdot)$, respectively. Then, with sufficiently large $m_{x^*}$, we have
\begin{align}\label{eqn:proof:thm:pointwiseleakagewithhnume}
    \mathbb{E}\left[h(G^*(U))\right]=\sum_{i=1}^{|\mathcal{X}|-1}h(i)P_X(\tilde{x}_i)+\frac{P_X(x^*)}{m_{x^*}}\sum_{i=|\mathcal{X}|}^{|\mathcal{X}|-1+m_{x^*}}h(i),
\end{align}
where $(\tilde{x}_i)_{i\in[1:|\mathcal{X}|-1]}$ is a sequence in non-decreasing order of probabilities $P_X(x)$, $x\in\mathcal{X}\setminus\{x^*\}$. Similarly, for sufficiently large $m_{x^*}$,
\begin{align}\label{eqn:proof:thm:pointwiseleakagewithhdenome}
    \mathbb{E}[h(G_y^*(U))]=\!\sum_{i=1}^{|\mathcal{X}|-1}h(i)P_{X|Y}(\bar{x}_i|y)+P_{X|Y}(x^*|y)\tau,
\end{align}
where $(\bar{x}_i)_{i\in[1:|\mathcal{X}|-1]}$ is a sequence in non-decreasing order of probabilities $P_{X|Y}(x|y)$, $x\in\mathcal{X}\setminus\{x^*\}$ and $\tau=\frac{1}{m_{x^*}}\sum_{i=|\mathcal{X}|}^{|\mathcal{X}|-1+m_{x^*}}h(i)$. We show that $\tau\rightarrow \infty$ as $m_{x^*}\rightarrow \infty$. 
Consider
\begin{align}
    \sum_{i=1}^n\frac{h(i)}{n}&=\sum_{i=1}^k\frac{h(i)}{n}+\sum_{i=k+1}^n\frac{h(i)}{n}\\
    &\geq \sum_{i=1}^k\frac{h(i)}{n}+\frac{(n-k)h(k)}{n}\label{eqn:proof:thm:pointwiseleakagewithh1}\\
    &= h(k)+\epsilon_n,\label{eqn:proof:thm:pointwiseleakagewithh2}
\end{align}
where \eqref{eqn:proof:thm:pointwiseleakagewithh1} follows since $h(n)$ is a non-decreasing function and \eqref{eqn:proof:thm:pointwiseleakagewithh2} follows as $n\rightarrow \infty$ with $\epsilon_n=\sum_{i=1}^k\frac{h(i)}{n}\rightarrow 0$, for every $k\in\mathbb{N}$. Now since $h(n)\rightarrow \infty$ as $n\rightarrow \infty$, we have $\frac{1}{n}\sum_{i=1}^ nh(i)\rightarrow \infty$ as $n\rightarrow\infty$. This gives that $\tau=\frac{1}{m_{x^*}}\sum_{i=|\mathcal{X}|}^{|\mathcal{X}|-1+m_{x^*}}h(i)\rightarrow \infty$ as $m_{x^*}\rightarrow \infty$. Using \eqref{eqn:proof:thm:pointwiseleakagewithhnume} and \eqref{eqn:proof:thm:pointwiseleakagewithhdenome}, we get
\begin{align}
    \sup_{U:U-X-Y}&\log{\frac{\mathbb{E}[h(G^*(U))]}{\mathbb{E}[h(G_y^*(U))]}}\\
    &\geq \frac{\sum_{i=1}^{|\mathcal{X}|-1}h(i)P_X(\tilde{x}_i)+P_X(x^*)\tau}{\sum_{i=1}^{|\mathcal{X}|-1}h(i)P_{X|Y}(\bar{x}_i|y)+P_{X|Y}(x^*|y)\tau}\\
      &= \frac{\frac{\sum_{i=1}^{|\mathcal{X}|-1}h(i)P_X(\tilde{x}_i))}{\tau}+P_X(x^*)}{\frac{\sum_{i=1}^{|\mathcal{X}|-1}h(i)P_{X|Y}(\bar{x}_i|y)}{\tau}+P_{X|Y}(x^*|y)}\\
      &=\frac{P_X(x^*)}{P_{X|Y}(x^*|y)}\label{eqn:proof:thm:pointwiseleakagewithh3}\\
      &=D_{\infty}(P_X\|P_{X|Y=y}),
\end{align}
where \eqref{eqn:proof:thm:pointwiseleakagewithh3} follows by taking limit $m_{x^*}\rightarrow \infty$ and noting that $\tau\rightarrow \infty$ as $m_{x^*}\rightarrow \infty$ as discussed above. This completes the proof of the lower bound.

\section{Proof of Theorem~\ref{theorem:BES}}\label{proof:MLGE-BES}
 Consider an arbitrary $\mathcal{U}$. Let $\gamma(P)$ denote the guesswork of a random variable with probability distribution $P$ over $\mathcal{U}$. For a $P_{U|X}$, let $A(u)=P_{U|X}(u|0)$ and $B(u)=P_{U|X}(u|1)$, for $u\in\mathcal{U}$. Since $|\mathcal{X}|=\{0,1\}$, the optimization in \eqref{eqn:maxLviaGE} over all $P_{U|X}$ is equivalent to the optimization over the distributions $A$ and $B$. So, for the binary erasure source in \eqref{eqn:erasuresource1} and \eqref{eqn:erasuresource2}, we have
\begin{align}
    P_{U|Y}(u|0)&=A(u),\\
    P_{U|Y}(u|1)&=B(u),\\
    P_{U|Y}(u|\text{e})&=\frac{A(u)+B(u)}{2},
\end{align}
for all $u\in\mathcal{U}$. In \eqref{definition:maxLviaGE}, let $G^*(\cdot)$ and $\{G^*_y:y\in\mathcal{Y}\}$ denote the optimal guessing strategy and the optimal set of guessing strategies in the numerator and the denominator, respectively. We have
\begin{align}
    \mathbb{E}[G^*(U)]&=\gamma\left(\frac{A+B}{2}\right),\label{eqn:erasurethmnum}\\
    \mathbb{E}[G_Y^*(U)]&=\sum_{y\in\mathcal{Y}}P_Y(y)\mathbb{E}[G_y^*(U)|Y=y]\\
    &=\frac{1-p}{2}\gamma(A)+\frac{1-p}{2}\gamma(B)+p\gamma\left(\frac{A+B}{2}\right)\label{eqn:erasurethmdenom}.
\end{align}
Taking the ratio of these two quantities in \eqref{eqn:erasurethmnum} and \eqref{eqn:erasurethmdenom}, we get
\begin{align}
    \frac{\mathbb{E}[G^*(U)]}{\mathbb{E}[G_Y^*(U)]}&=\frac{\gamma\left(\frac{A+B}{2}\right)}{\frac{1-p}{2}\gamma(A)+\frac{1-p}{2}\gamma(B)+p\gamma(\frac{A+B}{2})}\\
    &=\frac{1}{(1-p)\left(\frac{\frac{1}{2}\gamma(A)+\frac{1}{2}\gamma(B)}{\gamma(\frac{A+B}{2})}\right)+p}\label{eqn:erasureproof2}.
\end{align}
Let $\mathcal{P}$ denote the set of all probability distributions on $\mathcal{U}$ of cardinality, say, $n$, i.e.,
\begin{align}
    \mathcal{P}=\{x^n\in \mathbb{R}^n:\sum_{i=1}^nx_i=1, x_i\geq 0, \ \text{for} \ i\in[1:n]\}.
\end{align}
We have
\begin{align}
       &\mathcal{L}^{G}(X\rightarrow Y)\nonumber\\
       &=\sup_{U:U-X-Y}\log{ \frac{\mathbb{E}[G^*(U)]}{\mathbb{E}[G_Y^*(U)]}}\\
       &=\sup_{P_{U|X}}\frac{1}{(1-p)\left(\frac{\frac{1}{2}\gamma(A)+\frac{1}{2}\gamma(B)}{\gamma(\frac{A+B}{2})}\right)+p}\label{eqn:erasureopt-1}\\
       &=\sup_{P_U}\sup_{A,B\in\mathcal{P}:A+B=2P_U}\frac{1}{(1-p)\left(\frac{\frac{1}{2}\gamma(A)+\frac{1}{2}\gamma(B)}{\gamma(\frac{A+B}{2})}\right)+p},\label{eqn:erasureopt}
\end{align}
where the supremum in \eqref{eqn:erasureopt-1} is decomposed into two suprema in \eqref{eqn:erasureopt} using the fact that $P_U(u)=P_{U|X}(u|0)P_X(0)+P_{U|X}(u|1)P_X(1)=\frac{A(u)+B(u)}{2}$. In \eqref{eqn:erasureopt}, it suffices to consider distributions $P_U$ over the sets of even cardinality. This is without loss of generality because, if $|\mathcal{U}|$ is odd, we can add a new realization $u$ to $\mathcal{U}$ and set $P_U(u)=0$ without changing the value of the expression in \eqref{eqn:erasureopt}.  
We compute the value of the objective function in \eqref{eqn:erasureopt} for uniform distribution $P_U$ with the following choice of $A$ and $B$ of disjoint supports. Let $\mathcal{U}=\{u_1,u_2,\dots,u_n\}$. where $n$ is even.
\begin{align}
    P_U(u_i)&=\frac{1}{n}, i\in[1:n],\label{eqn:choicedist1}\\
    A(u_{2i-1})&=\frac{2}{n}, i\in[1:\frac{n}{2}],\label{eqn:choicedist2}\\
    B(u_{2i})&=\frac{2}{n}, i\in[1:\frac{n}{2}].\label{eqn:choicedist3}
\end{align}
We have
\begin{align}
   \frac{1}{(1-p)\left(\frac{\frac{1}{2}\gamma(A)+\frac{1}{2}\gamma(B)}{\gamma(\frac{A+B}{2})}\right)+p}&=\frac{1}{(1-p)\frac{n+2}{2(n+1)}+p}\\
    &=\frac{2n+1}{n(p+1)+2}\\
    &\rightarrow \frac{2}{1+p}\ \text{as} \ n\rightarrow \infty.
\end{align}
This shows that 
$\mathcal{L}^G(X\rightarrow Y)\geq \frac{2}{1+p}$.

We next show  that the choice of the distributions in \eqref{eqn:choicedist1}-\eqref{eqn:choicedist3} is optimal for \eqref{eqn:erasureopt}. Towards this, we shall show the following claim (proved later), for a fixed $P_U$.  

\restatedclaim*

Now, for a fixed $P_U$, the inner optimization problem in \eqref{eqn:erasureopt} can be upper bounded as  
\begin{align}
    &\sup_{A,B\in\mathcal{P}:A+B=2P_U}\frac{1}{(1-p)\left(\frac{\frac{1}{2}\gamma(A)+\frac{1}{2}\gamma(B)}{\gamma(\frac{A+B}{2})}\right)+p}\\
    &=\frac{1}{(1-p)\left(\inf\limits_{A,B\in\mathcal{P}:A+B=2P_U}\frac{\frac{1}{2}\gamma(A)+\frac{1}{2}\gamma(B)}{\gamma(P_U)}\right)+p}\\
    &\leq\frac{1}{(1-p)\left(\frac{\sum_{i=1}^{\frac{n}{2}}i(P_U(u_{2i})+P_U(u_{2i-1}))}{\sum_{i=1}^niP_U(u_i)}\right)+p}\label{eqn:erasurethmupperbnd1}\\
    &\leq \frac{1}{(1-p)\frac{1}{2}+p}\label{eqn:erasureproof4}\\
    &=\frac{2}{1+p},
\end{align}
where \eqref{eqn:erasurethmupperbnd1} follows from Claim~\ref{claim} and \eqref{eqn:erasureproof4} holds because
\begin{align}\label{eqn:erasureproof3}
\frac{\sum_{i=1}^{\frac{n}{2}}i(P_U(u_{2i})+P_U(u_{2i-1}))}{\sum_{i=1}^niP_U(u_i)}\geq \frac{1}{2}.
\end{align}

It now remains to prove Claim~\ref{claim}.

\emph{Proof of Claim~\ref{claim}:} 
We first extend the definition of guesswork $\gamma(P)$ to include any real vector of length $|\mathcal{U}|$ for $P$ rather than limiting to probability distributions. That is, for $\tilde{A}:\mathcal{U}\rightarrow \mathbb{R}_+$, we define
\begin{align}
    \bar{\gamma}(\tilde{A})&=\min_{G}\sum_{i=1}^nG(u_i)\tilde{A}(u_i)\label{eqn:erasuresourceclaim1}\\
    &=\sum_{i=1}^ni\tilde{A}(u_{\sigma(i)}),
\end{align}
where the minimum in \eqref{eqn:erasuresourceclaim1} over all guessing functions $G$, and $\sigma$ is a permutation on $[1:n]$ such that $\tilde{A}(u_{\sigma(i)})\geq \tilde{A}(u_{\sigma(i)})\geq\dots \geq \tilde{A}(u_{\sigma(n)})$.
Let $\mathcal{F}$ denote the set of all functions from $\mathcal{U}$ to $\mathbb{R}_+$. Note that
\begin{align}
    &\inf_{A,B\in\mathcal{P}:A+B=2P_U}\frac{1}{2}\gamma(A)+\frac{1}{2}\gamma(B)\nonumber\\
    &=\inf_{A,B\in\mathcal{P}:A+B=2P_U}\frac{1}{2}\bar{\gamma}({A})+\frac{1}{2}\bar{\gamma}({B})\\
    &\geq\inf_{\tilde{A},\tilde{B}\in\mathcal{F}:\tilde{A}+\tilde{B}=2P_U}\frac{1}{2}\bar{\gamma}(\tilde{A})+\frac{1}{2}\bar{\gamma}(\tilde{B})\label{eqn:erasuresourceclaimstep},
\end{align}
where \eqref{eqn:erasuresourceclaimstep} holds since $\mathcal{P}\subseteq \mathcal{F}$.
In view of this, to prove \eqref{eqn:erasureproof5} in Claim~\ref{claim}, it suffices to show that
\begin{align}
     &\inf_{\tilde{A},\tilde{B}\in\mathcal{F}:\tilde{A}+\tilde{B}=2P_U}\frac{1}{2}\bar{\gamma}(\tilde{A})+\frac{1}{2}\bar{\gamma}(\tilde{B})\nonumber\\
     &=\sum_{i=1}^{\frac{n}{2}}i(P_U(u_{2i})+P_U(u_{2i-1}))\label{eqn:erasuresourceclaim2}.
\end{align}
 We show \eqref{eqn:erasuresourceclaim2} in two steps. First we show that the infimum in the left-hand-side of \eqref{eqn:erasuresourceclaim2} is attained by $\tilde{A}$ and $\tilde{B}$ with disjoint supports, denoted by $\tilde{A}\perp \tilde{B}$ (in other words, $\tilde{A}(u)\neq 0 \Rightarrow \tilde{B}(u)=0$ and $\tilde{B}(u)\neq 0 \Rightarrow \tilde{A}(u)=0$), i.e.,
\begin{align}
    &\inf_{\tilde{A},\tilde{B}\in\mathcal{F}:\tilde{A}+\tilde{B}=2P_U}\frac{1}{2}\bar{\gamma}(\tilde{A})+\frac{1}{2}\bar{\gamma}(\tilde{B})\nonumber\\
     &=\inf_{\substack{\tilde{A},\tilde{B}\in\mathcal{F}:\tilde{A}\perp \tilde{B},\\ \tilde{A}+\tilde{B}=2P_U}}\frac{1}{2}\bar{\gamma}(\tilde{A})+\frac{1}{2}\bar{\gamma}(\tilde{B})\label{eqn:erasuresourceclaim3}.
\end{align}
We then show that the infimum in the right-hand-side of \eqref{eqn:erasuresourceclaim3} is attained by $\tilde{A}^*$ and $\tilde{B}^*$ given by
\begin{align}
    \tilde{A}^*(u_{2i-1})&=2P_U(u_{2i-1}), i\in[1:\frac{n}{2}]\label{eqn:claim1proof1}\\
    \tilde{A}^*(u_{2i})&=0, i\in[1:\frac{n}{2}]\\
    \tilde{B}^*(u_{2i-1})&=0, i\in[1:\frac{n}{2}]\\
    \tilde{B}^*(u_{2i})&=2P_U(u_{2i}), i\in[1:\frac{n}{2}]\label{eqn:claim1proof2}
\end{align}
with the infimum value equal to the expression in the right-hand-side of \eqref{eqn:erasuresourceclaim2}. 

To show \eqref{eqn:erasuresourceclaim3}, consider arbitrary $\tilde{A}$ and $\tilde{B}$ such that $\tilde{A}(u)+\tilde{B}(u)=2P_U(u)$, $u\in\mathcal{U}$. We argue that, for each $u^\prime\in\mathcal{U}$, we can move all the mass $2P_U(u^\prime)$ to $\tilde{A}(u^\prime)$ (resp. $\tilde{B}(u^\prime)$) if the position of $u^\prime$ in the decreasing order of the values $\tilde{A}(u)$, $u\in\mathcal{U}$ is smaller (resp. larger) than the position of $u^\prime$ in the decreasing order of the values $\tilde{B}(u)$, $u\in\mathcal{U}$ without increasing the value of the objective function in the left-hand-side of  \eqref{eqn:erasuresourceclaim3}. Let $\tilde{A}_1(u)=\tilde{A}(u)$ and $\tilde{B}_1(u)=\tilde{B}(u)$, for all $u\in\mathcal{U}$. For $k\in[2:n+1]$, we define permutations $\sigma_k$ and $\tau_k$ on $[1:n]$, the functions $\tilde{A}_k$ and $\tilde{B}_k$, and the indices $i_k,j_k\in[1:n]$, in the following iterative manner. 

\noindent For $k\in [1:n]$
\begin{enumerate}
    \item Suppose $\sigma_k$ and $\tau_k$ are the permutations on $[1:n]$ such that 
\begin{align}
    \tilde{A}_{k}(u_{\sigma_k(1)})\geq \tilde{A}_k(u_{\sigma_k(2)})\geq \dots \geq \tilde{A}_k(u_{\sigma_k(n)}),\\
    \tilde{B}_k(u_{\tau_k(1)})\geq \tilde{B}_k(u_{\tau_k(2)})\geq \dots \geq \tilde{B}_k(u_{\tau_k(n)}).
\end{align}
\item Let $i_k,j_k\in[1:n]$ be such that $\sigma_k(i_k)=k$ and $\tau_k(j_k)=k$. If $i_k\leq j_k$, we define
\begin{align}
    \tilde{A}_{k+1}(u_k)&=\tilde{A}_k(u_k)+\tilde{B}_k(u_k)=2P_U(u_k) \\
    \tilde{B}_{k+1}(u_k)&=0\label{eqn:esaruresourceclaim4}\\
    \tilde{A}_{k+1}(u_i)&=\tilde{A}_k(u_i), i\neq k \\
    \tilde{B}_{k+1}(u_i)&=\tilde{B}_k(u_i), i\neq k,\label{eqn:esaruresourceclaim5}
\end{align}
otherwise, we define
\begin{align}
    \tilde{A}_{k+1}(u_k)&=0 \\
    \tilde{B}_{k+1}(u_k)&=\tilde{A}_k(u_k)+\tilde{B}_k(u_k)=2P_U(u_k)\\
    \tilde{A}_{k+1}(u_i)&=\tilde{A}_k(u_i), i\neq k \\
    \tilde{B}_{k+1}(u_i)&=\tilde{B}_k(u_i), i\neq k.
\end{align}
\end{enumerate}
Note that $\tilde{A}_{n+1}\perp \tilde{B}_{n+1}$. Let $i_l\leq j_l$, without loss of generality. The other case is similar. We have
\begin{align}
    &\frac{1}{2}\bar{\gamma}(\tilde{A}_l)+\frac{1}{2}\bar{\gamma}(\tilde{B}_l)\nonumber\\
    &=\sum_{i\in[1:n],i\neq i_l}i\tilde{A}_l(u_{\sigma_l(i)})+i_l\tilde{A}_l(u_l)\nonumber\\
    &\hspace{12pt}+\sum_{j\in[1:n],j\neq j_l}j\tilde{B}_l(u_{\tau_l(j)})+j_l\tilde{B}_l(u_l)\label{eqn:erasuresource6}\\
    &\geq\sum_{i\in[1:n],i\neq i_l}i\tilde{A}_l(u_{\sigma_l(i)})+i_l(\tilde{A}_l(u_1)+\tilde{B}_l(u_l))\nonumber\\
    &\hspace{12pt}+\sum_{j\in[1:n],j\neq j_l}j\tilde{B}_l(u_{\tau_l(j)})+j_l\cdot0\\
    &=\sum_{i\in[1:n]}i\tilde{A}_{l+1}(u_{\sigma_{l}(i)})+\sum_{j\in[1:n]}j\tilde{B}_{l+1}(u_{\tau_{l}(j)})\\
    &\geq \frac{1}{2}\bar{\gamma}(\tilde{A}_{l+1})+\frac{1}{2}\bar{\gamma}(\tilde{B}_{l+1})\label{eqn:erasuresource7}.
\end{align}
Repeating the steps from \eqref{eqn:erasuresource6}-\eqref{eqn:erasuresource7} appropriately for $l\in[1:n]$, we get
\begin{align}
  \frac{1}{2}\bar{\gamma}(\tilde{A}_1)+\frac{1}{2}\bar{\gamma}(\tilde{B}_1)\geq \frac{1}{2}\bar{\gamma}(\tilde{A}_{n+1})+\frac{1}{2}\bar{\gamma}(\tilde{B}_{n+1}),
  \end{align}
  where $\tilde{A}_{n+1}\perp \tilde{B}_{n+1}$.
  This proves \eqref{eqn:erasuresourceclaim3}.

  Let $|\tilde{A}|=|\text{supp}(\tilde{A})|$, where $\text{supp}(\tilde{A})=|\{u\in\mathcal{U}:\tilde{A}(u)>0\}|$, for $\tilde{A}\in\mathcal{F}$. To show that the infimum in the right-hand-side of \eqref{eqn:erasuresourceclaim3} is attained by $\tilde{A}^*$ and $\tilde{B}^*$ in \eqref{eqn:claim1proof1}-\eqref{eqn:claim1proof2}, we first argue that 
\begin{align}
    &\inf_{\substack{\tilde{A},\tilde{B}\in\mathcal{F}:\tilde{A}\perp \tilde{B},\\ \tilde{A}+\tilde{B}=2P_U}}\frac{1}{2}\bar{\gamma}(\tilde{A})+\frac{1}{2}\bar{\gamma}(\tilde{B})\nonumber\\
    &=\inf_{\substack{\tilde{A},\tilde{B}\in\mathcal{F}:\tilde{A}\perp \tilde{B},\\ \tilde{A}+\tilde{B}=2P_U,\\ |\tilde{A}|-|\tilde{B}|\in \{0,1\}}}\frac{1}{2}\bar{\gamma}(\tilde{A})+\frac{1}{2}\bar{\gamma}(\tilde{B})\label{eqn:subclaim1}
\end{align}
That is, we argue that it suffices to consider $\tilde{A}$ and $\tilde{B}$ whose support sizes differ by at most 1. To see this, consider arbitrary $\tilde{A}$ and $\tilde{B}$ such that $\tilde{A}\perp \tilde{B}$ and $\tilde{A}+\tilde{B}=2P_U$. Let $\text{supp}(\tilde{A})=\{x_1,\dots,x_p\}$ and $\text{supp}(\tilde{B})=\{y_1,\dots,y_q\}$ such that $P_U(x_i)\geq P_U(x_{i+1})$, for $i\in[1:p-1]$, and $P_U(y_j)\geq P_U(y_{j+1})$, for $j\in[1:q-1]$. Without loss of generality, suppose $p\geq q$. We argue that a few $x_i$'s from $\text{supp}(\tilde{A})$ with least probability values with respect to $P_U$ can be moved to $\text{supp}(\tilde{B})$ without increasing the value of $\bar{\gamma}(\tilde{A})+\bar{\gamma}(\tilde{B})$ so that the support sizes of the modified $\tilde{A}$ and $\tilde{B}$ differ by at most $1$.  Let $p^\prime=\lceil\frac{p+q}{2} \rceil$. We define $\tilde{A}^\prime,\tilde{B}^\prime\in\mathcal{F}$ with $|\tilde{A}^\prime|=p^\prime$ and $|\tilde{B}^\prime|=q+p-p^\prime$ as
\begin{align}
    \tilde{A}^\prime(x_i)&=\tilde{A}(x_i), i\in\left[1:p^\prime\right],\\
    \tilde{A}^\prime(x_i)&=0, i\in\left[p^\prime+1:p\right],\\
    \tilde{B}^\prime(y_j)&=\tilde{B}(y_j), j\in[1:q],\\
    \tilde{B}^\prime(x_j)&=\tilde{A}(x_j), j\in\left[p^\prime+1:p\right]. 
\end{align}
Notice that $|\tilde{A}^\prime|-|\tilde{B}^\prime|=2p^\prime-p-q\in\{0,1\}$. So,
\begin{align}
    &\bar{\gamma}(\tilde{A})+\bar{\gamma}(\tilde{B})\nonumber\\
    &=\sum_{i=1}^{\left\lceil\frac{p+q}{2}\right\rceil}iP_U(x_i)+\sum_{i=\left\lceil\frac{p+q}{2}\right\rceil+1}^piP_U(x_i)+\sum_{j=1}^qjP_U(y_j)\\
    &\geq \sum_{i=1}^{\left\lceil\frac{p+q}{2}\right\rceil}iP_U(x_i)+\sum_{j=1}^qjP_U(y_j)+\sum_{j=q+1}^{\left\lfloor\frac{p+q}{2}\right\rfloor} jP_U(x_{j+\left\lceil\frac{p-q}{2}\right\rceil})\label{eqn:claimproof3}\\
    &=\sum_{i=1}^{\left\lceil\frac{p+q}{2}\right\rceil}i\tilde{A}^\prime(x_i)++\sum_{j=1}^qj\tilde{B}^\prime(y_j)+\sum_{j=q+1}^{\left\lfloor\frac{p+q}{2}\right\rfloor} j\tilde{B}^\prime(x_{j+\left\lceil\frac{p-q}{2}\right\rceil})\\
    &\geq \bar{\gamma}(\tilde{A}^\prime)+\bar{\gamma}(\tilde{B}^\prime)\label{eqn:claimproof4},
\end{align}
where \eqref{eqn:claimproof3} holds because $p\geq q$ and \eqref{eqn:claimproof4} follows from the definition of $\bar{\gamma}$ in \eqref{eqn:erasuresourceclaim1}. This proves \eqref{eqn:subclaim1}. 

We now show that the infimum in the right-hand-side of \eqref{eqn:subclaim1} is attained by $\tilde{A}^*$ and $\tilde{B}^*$ in \eqref{eqn:claim1proof1}-\eqref{eqn:claim1proof2}. Consider $\tilde{A},\tilde{B}\in\mathcal{F}$ such that $\tilde{A}\perp \tilde{B}$, $\tilde{A}+\tilde{B}=2P_U$, and $|\tilde{A}|-|\tilde{B}|\in\{0,1\}$. Suppose again $\text{supp}(\tilde{A})=\{x_1,\dots,x_p\}$ and $\text{supp}(\tilde{B})=\{y_1,\dots,y_q\}$ with $x_i$'s and $y_i$'s in non-increasing order of probabilities with respect to $P_U$ as before. We have either $p=q$ or $p=q+1$. For $r\in [1:n]$, we argue that, if $r$ is odd (resp. even), $u_r\in\text{supp}(\tilde{B})$ (resp. $u_r\in\text{supp}(\tilde{A})$) can be swapped with $u_i\in \text{supp}(\tilde{A})$ (resp. $u_i\in \text{supp}(\tilde{B})$), for some $i>r$, without increasing the value of $\bar{\gamma}(\tilde{A})+\bar{\gamma}(\tilde{B})$. Let $\tilde{A}_1^{\prime\prime}(u)=\tilde{A}(u)$ and $\tilde{B}_1^{\prime\prime}(u)=\tilde{B}(u)$, for all $u\in\mathcal{U}$. We define $\tilde{A}_r^{\prime\prime}$ and $\tilde{B}_r^{\prime\prime}$, for $r\in[2:n]$, in the following iterative manner. Set $r=1$.

\noindent While $r\leq n-1$ 
\newline
    \hspace*{6pt} Initialize $\tilde{A}_{r+1}^{\prime\prime}(u)=\tilde{B}_{r+1}^{\prime\prime}(u)=0$, for all $u\in\mathcal{U}$.\\
    \hspace*{6pt} If $r$ is even,
    \begin{itemize}
        \item if $u_r=y_{\frac{r}{2}}$, set $r=r+1$ and exit the loop.
        \item if $u_r=x_{\frac{r}{2}+1}$, define
        \begin{itemize}
            \item $\tilde{A}_{r+1}^{\prime\prime}(x_i)=\tilde{A}_r^{\prime\prime}(x_i)$, for $x_i\in\text{supp}(\tilde{A}_r^{\prime\prime})$, $i\neq \frac{r}{2}+1$,
            \item $\tilde{B}_{r+1}^{\prime\prime}(y_j)=\tilde{A}_r^{\prime\prime}(y_j)$, for $y_j\in\text{supp}(\tilde{B}_r^{\prime\prime})$, $j\neq \frac{r}{2}+1$,
            \item $\tilde{A}_{r+1}^{\prime\prime}(y_{\frac{r}{2}+1})=\tilde{B}_{r}^{\prime\prime}(y_{\frac{r}{2}+1})$, 
            \item $\tilde{B}_{r+1}^{\prime\prime}(x_{\frac{r}{2}+1})=\tilde{A}_{r}^{\prime\prime}(x_{\frac{r}{2}+1})$.
        \end{itemize}
        \item set
        \begin{align}
            \{x_1,x_2,\dots,x_p\}&=\text{supp}(\tilde{A}_{r+1}^{\prime\prime}),\\
            \{y_1,y_2,\dots,y_q\}&=\text{supp}(\tilde{B}_{r+1}^{\prime\prime}),
        \end{align}
        such that $x_i$'s and $y_j$'s are in non-increasing order of probabilities with respect to $P_U$.
    \end{itemize}
   \hspace{6pt} else if $r$ is odd,
    \begin{itemize}
        \item if $u_r=x_{\frac{r+1}{2}}$, set $r=r+1$ and exit the loop.
        \item if $u_r=y_{\frac{r+1}{2}}$, define
        \begin{itemize}
            \item $\tilde{A}_{r+1}^{\prime\prime}(x_i)=\tilde{A}_r^{\prime\prime}(x_i)$, for $x_i\in\text{supp}(\tilde{A}_r^{\prime\prime})$, $i\neq \frac{r+1}{2}$,
            \item $\tilde{B}_{r+1}^{\prime\prime}(y_j)=\tilde{A}_r^{\prime\prime}(y_j)$, for $y_j\in\text{supp}(\tilde{B}_r^{\prime\prime})$, $j\neq \frac{r+1}{2}$,
            \item $\tilde{A}_{r+1}^{\prime\prime}(y_{\frac{r+1}{2}})=\tilde{B}_{r}^{\prime\prime}(y_{\frac{r+1}{2}})$, 
            \item $\tilde{B}_{r+1}^{\prime\prime}(x_{\frac{r+1}{2}})=\tilde{A}_{r}^{\prime\prime}(x_{\frac{r+1}{2}})$.
        \end{itemize}
        \item set
        \begin{align}
            \{x_1,x_2,\dots,x_p\}&=\text{supp}(\tilde{A}_{r+1}^{\prime\prime}),\\
            \{y_1,y_2,\dots,y_q\}&=\text{supp}(\tilde{B}_{r+1}^{\prime\prime}),
        \end{align}
        such that $x_i$'s and $y_j$'s are in non-increasing order of probabilities with respect to $P_U$.
    \end{itemize}
We now show that the value of the function $\bar{\gamma}(\tilde{A}_r)+\bar{\gamma}(\tilde{B}_r)$ does not increase in each iteration of the while loop.
    \begin{align}
        &\bar{\gamma}(\tilde{A}_r^{\prime\prime})+\bar{\gamma}(\tilde{B}_r^{\prime\prime})\nonumber\\
        &=\sum_{i=1}^p i\tilde{A}_r^{\prime\prime}(x_i)+\sum_{j=1}^q j\tilde{B}_r^{\prime\prime}(y_j)\\
        &=\sum_{i=1}^\frac{r}{2} i\tilde{A}_r(x_i)+(\frac{r}{2}+1)\tilde{B}_r(y_{\frac{r}{2}+1})+\sum_{i=\frac{r}{2}+2}^p i\tilde{A}_r(x_i)\nonumber\\
        &\hspace{12pt}+\sum_{j=1}^\frac{r}{2} j\tilde{B}_r(y_j)+(\frac{r}{2}+1)\tilde{A}_r(y_{\frac{r}{2}+1})+\sum_{j=\frac{r}{2}+2}^q j\tilde{A}_r(y_j)\\
        &\geq \bar{\gamma}(\tilde{A}_{r+1}^{\prime\prime})+\bar{\gamma}(\tilde{B}_{r+1}^{\prime\prime})\label{eqn:subclaimproof5},
    \end{align}
    where \eqref{eqn:subclaimproof5} follows from the definition of $\bar{\gamma}$ in \eqref{eqn:erasuresourceclaim1}.

Now noting that the resulting $\tilde{A}$ and $\tilde{B}$ from the implementation of the while loop above are exactly equal to $\tilde{A}^*$ and $\tilde{B}^*$ in \eqref{eqn:claim1proof1}-\eqref{eqn:claim1proof2} shows that they achieve the infimum in the right-hand-side of \eqref{eqn:erasuresourceclaim3}. This completes the proof of Claim~\ref{claim} and hence the proof of Theorem~\ref{theorem:BES}.
\section{Proof of Theorem~\ref{thm:obmgl}}\label{proof:obgml}
We first state a lemma which will be useful in the proofs of Theorems~\ref{thm:obmgl} and \ref{thm:pwrhoguesleak}.
\begin{lemma}[{\cite[Lemma~2 and (28)]{Slamatianetal19}}]\label{lemma:guessmom}
    For a joint probability distribution $P_{XY}$ and any $\rho>0$, 
    \begin{align}
    \inf_{P_{\hat{X}}}\log\mathbb{E}[V_\rho(X,\hat{X}_1^\infty)]&=\rho H_{\frac{1}{1+\rho}}(X),\\
        \inf_{P_{\hat{X}|Y}}\log\sum_{y\in\mathcal{Y}}P_Y(y)\mathbb{E}[V_\rho(X,\hat{X}_1^\infty)|Y=y]&=\rho H^\text{A}_{\frac{1}{1+\rho}}(X|Y).
    \end{align}
\end{lemma}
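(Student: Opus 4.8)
The plan is to reduce the two infima to the classical problem of minimizing a $\rho$-th moment of a guessing distribution, which is solved by Hölder's inequality. First I would condition on the hidden value. Given a guessing distribution $P_{\hat X}$, write $q_x := P_{\hat X}(x)$; conditioned on $X=x$ the geometric structure of i.i.d.\ guessing gives $\Pr(G(X,\hat X_1^\infty)=k \mid X=x) = (1-q_x)^{k-1}q_x$ for $k\ge 1$ when $q_x>0$, while if $q_x=0$ and $P_X(x)>0$ then $G=\infty$ almost surely so $\mathbb{E}[V_\rho(X,\hat X_1^\infty)]=\infty$ and such $P_{\hat X}$ may be discarded. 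Hence, by monotone convergence (all summands are nonnegative),
\begin{align*}
\mathbb{E}\!\left[V_\rho(X,\hat X_1^\infty)\mid X=x\right]
= q_x\sum_{k\ge 1}\binom{k+\rho-1}{\rho}(1-q_x)^{k-1}.
\end{align*}

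The next step is to collapse this sum using the generalized binomial (negative binomial) series $\sum_{k\ge 1}\binom{k+\rho-1}{\rho}z^{k-1}=(1-z)^{-(1+\rho)}$, valid for $|z|<1$ and $\rho>-1$; taking $z=1-q_x$ yields $\mathbb{E}[V_\rho\mid X=x]=q_x^{-\rho}$, and averaging over $X$ gives $\mathbb{E}[V_\rho(X,\hat X_1^\infty)]=\sum_x P_X(x)\,P_{\hat X}(x)^{-\rho}$. It then remains to minimize $\sum_x P_X(x)q_x^{-\rho}$ over probability vectors $q$. I would apply Hölder's inequality with conjugate exponents $1+\rho$ and $\tfrac{1+\rho}{\rho}$ to the factorization $P_X(x)^{1/(1+\rho)}=\bigl(P_X(x)q_x^{-\rho}\bigr)^{1/(1+\rho)}q_x^{\rho/(1+\rho)}$, obtaining
\begin{align*}
\Bigl(\sum_x P_X(x)^{\frac{1}{1+\rho}}\Bigr)^{1+\rho}\ \le\ \sum_x P_X(x)\,q_x^{-\rho},
\end{align*}
with equality at $q_x\propto P_X(x)^{1/(1+\rho)}$, which after normalization is a legitimate distribution, so the infimum is attained. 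Matching the logarithm of the right side with $H_\alpha(X)=\frac{1}{1-\alpha}\log\sum_x P_X(x)^\alpha$ at $\alpha=\frac{1}{1+\rho}$ (where $\frac{1}{1-\alpha}=\frac{1+\rho}{\rho}$) identifies the minimum as $\rho H_{1/(1+\rho)}(X)$, giving the first identity.

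For the second identity I would run the same computation pointwise in $y$: conditioned on $Y=y$ the adversary may use $P_{\hat X\mid Y=y}$, so $\mathbb{E}[V_\rho\mid Y=y]=\sum_x P_{X\mid Y}(x\mid y)\,P_{\hat X\mid Y}(x\mid y)^{-\rho}$, and the objective $\sum_y P_Y(y)\mathbb{E}[V_\rho\mid Y=y]$ decouples across $y$. Minimizing each inner sum exactly as above yields $\sum_y P_Y(y)\bigl(\sum_x P_{X\mid Y}(x\mid y)^{1/(1+\rho)}\bigr)^{1+\rho}$, whose logarithm is precisely $\rho H^{\text{A}}_{1/(1+\rho)}(X\mid Y)$ by the definition of the Arimoto conditional entropy (at $\alpha=\frac{1}{1+\rho}$ one has $\frac{\alpha}{1-\alpha}=\frac{1}{\rho}$ and $\frac{1}{\alpha}=1+\rho$). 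I expect the only technical care needed to be in the series manipulation — convergence and the interchange of sum and expectation, both handled by nonnegativity — and in checking that the Hölder equality case is a genuine probability vector; the rest is routine Rényi-entropy bookkeeping, and indeed the real content is the observation that oblivious guessing turns $V_\rho$ into the closed form $q_x^{-\rho}$.
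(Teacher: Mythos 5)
Your proof is correct, and it is worth noting that the paper itself does not prove this lemma---it simply imports it from Salamatian et al.\ as a citation. Your derivation is a complete, self-contained reproof along the standard lines: condition on $X=x$ to get a geometric distribution for the guessing time, collapse the factorial moment via the negative binomial series $\sum_{k\ge 1}\binom{k+\rho-1}{\rho}z^{k-1}=(1-z)^{-(1+\rho)}$ to obtain $\mathbb{E}[V_\rho\mid X=x]=q_x^{-\rho}$, then minimize $\sum_x P_X(x)q_x^{-\rho}$ over the simplex by H\"older with conjugate exponents $1+\rho$ and $(1+\rho)/\rho$, with equality at $q_x\propto P_X(x)^{1/(1+\rho)}$. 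The algebra matching $\frac{1}{1-\alpha}=\frac{1+\rho}{\rho}$ and $\frac{\alpha}{1-\alpha}=\frac{1}{\rho}$ at $\alpha=\frac{1}{1+\rho}$ is right, and the observation that the conditional objective decouples across $y$ (since $P_{\hat X\mid Y=y}$ may be chosen independently for each $y$ and $\log$ is monotone) correctly reduces the second identity to the first applied pointwise. The only point you could state more explicitly is that the infimum is taken over \emph{all} $P_{\hat X}$, including those with $q_x=0$ for some $x\in\mathrm{supp}(P_X)$; you correctly note these give $+\infty$ and can be discarded, but it would also be good to observe that this discarding is harmless because the H\"older minimizer $q_x\propto P_X(x)^{1/(1+\rho)}$ has full support on $\mathrm{supp}(P_X)$, so the infimum over the restricted set is attained and equals the global one. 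This is the same argument as in the cited reference; there is no alternative route in the paper to compare against.
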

We prove Theorem~\ref{thm:obmgl} now.  The oblivious maximal $\rho$-guesswork leakage can be simplified as
  \begin{align}
      &\mathcal{L}_\rho^{\text{oblv}-G}(X\rightarrow Y)\nonumber\\
       &=\sup_{U:U-X-Y}\log{\frac{\inf_{P_{\hat{X}}}\mathbb{E}[V_\rho(X,\hat{X}_1^\infty)]}{\inf_{P_{\hat{X}|Y}}\sum_{y\in\mathcal{Y}}P_Y(y)\mathbb{E}[V_\rho(X,\hat{X}_1^\infty)|Y=y]}}\\
       &=\sup_{U:U-X-Y} \log{\frac{\mathsf{e}^{\rho H_{\frac{1}{1+\rho}}(U)}}{\mathsf{e}^{\rho H_{\frac{1}{1+\rho}}^{\text{A}}(U|Y)}}}\label{eqn:thm:obmgl1}\\
        &=\rho\sup_{U:U-X-Y} I_{\frac{1}{1+\rho}}^\text{A}(U;Y)\label{eqn:eqn:thm:obmgl3}\\
        &=\rho\sup_{P_{\tilde{X}}\ll P_X} I_{\frac{1}{1+\rho}}^\text{A}(\tilde{X};Y)\label{eqn:eqn:thm:obmgl2},
  \end{align}
  where \eqref{eqn:thm:obmgl1} follows from Lemma~\ref{lemma:guessmom} and \eqref{eqn:eqn:thm:obmgl2} holds because the optimization problem in \eqref{eqn:eqn:thm:obmgl3} is shown to be equal to that of in \eqref{eqn:eqn:thm:obmgl2} in \cite[Theorem~5]{LiaoSKC20}.

   \section{Proof of Theorem~\ref{thm:pwrhoguesleak}}\label{proof:pwrhoguesleak}
  Let $\alpha=\frac{1}{1+\rho}$. The pointwise oblivious maximal $\rho$-guesswork leakage can be simplified as
  \begin{align}
       &\mathcal{L}_\rho^{\text{pw-oblv}-G}(X\rightarrow y)\nonumber\\
       &=\sup_{U:U-X-Y}\log{\frac{\inf_{P_{\hat{X}}}\mathbb{E}[V_\rho(X,\hat{X}_1^\infty)]}{\inf_{P_{\hat{X}|Y=y}}\mathbb{E}[V_\rho(X,\hat{X}_1^\infty)|Y=y]}}\\
       &=\sup_{U:U-X-Y} \log{\frac{\mathsf{e}^{\rho H_{\frac{1}{1+\rho}}(U)}}{\mathsf{e}^{\rho H_{\frac{1}{1+\rho}}^{\text{A}}(U|Y=y)}}}\label{eqn:thm5proof1}\\
       &= \rho\sup_{U:U-X-Y} H_\alpha(U)-H_\alpha(U|Y=y)\\
       &=\sup_{U:U-X-Y}\log\frac{\left(\sum_{u\in\mathcal{U}}P_U(u)^\alpha\right)^\frac{1}{\alpha}}{\left(\sum_{u\in\mathcal{U}}P_{U|Y}(u|y)^\alpha\right)^\frac{1}{\alpha}},  
  \end{align}
  where \eqref{eqn:thm5proof1} follows from Lemma~\ref{lemma:guessmom}. Now we show that
  \begin{align}
      \sup_{U:U-X-Y}\frac{\left(\sum_{u\in\mathcal{U}}P_U(u)^\alpha\right)^\frac{1}{\alpha}}{\left(\sum_{u\in\mathcal{U}}P_{U|Y}(u|y)^\alpha\right)^\frac{1}{\alpha}}=\max_{x\in\mathcal{X}}\frac{P_X(x)}{P_{X|Y}(x|y)}.
  \end{align}
  We first prove the upper bound. Assume, without loss of generality, that $P_X(x)>0$, for all $x\in\mathcal{X}$. Consider 
  \begin{align}
      &\left(\sum_{u\in\mathcal{U}}P_{U|Y}(u|y)^\alpha\right)^\frac{1}{\alpha}\nonumber\\
      &=\left(\sum_{u\in\mathcal{U}}\left(\sum_{x\in\mathcal{X}}P_{UX|Y}(u,x|y)\right)^\alpha\right)^\frac{1}{\alpha}\\
      &=\left(\sum_{u\in\mathcal{U}}\left(\sum_{x\in\mathcal{X}}P_{U|X}(u|x)P_{X|Y}(x|y)\right)^\alpha\right)^\frac{1}{\alpha}\\
      &=\left(\sum_{u\in\mathcal{U}}\left(\sum_{x\in\mathcal{X}}P_{U|X}(u|x)\frac{P_{Y|X}(y|x)P_X(x)}{P_Y(y)}\right)^\alpha\right)^\frac{1}{\alpha}\\
      &\geq \left(\min_{x\in\mathcal{X}}\frac{P_{Y|X}(y|x)}{P_Y(y)}\right) \!\left(\sum_{u\in\mathcal{U}}\left(\sum_{x\in\mathcal{X}}P_{U|X}(u|x)P_X(x)\right)^\alpha\right)^\frac{1}{\alpha}\\
      &=\left(\min_{x\in\mathcal{X}}\frac{P_{Y|X}(y|x)}{P_Y(y)}\right)\left(\sum_{u\in\mathcal{U}}P_U(u)^\alpha\right)^\frac{1}{\alpha}.
  \end{align}
  So, we have
  \begin{align}
      \frac{\left(\sum_{u\in\mathcal{U}}P_U(u)^\alpha\right)^\frac{1}{\alpha}}{\left(\sum_{u\in\mathcal{U}}P_{U|Y}(u|y)^\alpha\right)^\frac{1}{\alpha}} &\leq \frac{1}{\left(\min_{x\in\mathcal{X}}\frac{P_{Y|X}(y|x)}{P_Y(y)}\right)}\\
      &=\left(\max_{x\in\mathcal{X}}\frac{P_{Y}(y)}{P_{Y|X}(y|x)}\right)\\
      &=\left(\max_{x\in\mathcal{X}}\frac{P_{X}(x)}{P_{X|Y}(x|y)}\right).
  \end{align}
We prove the lower bound now. we use the `shattering' conditional distribution $P_{U|X}$~\cite[Proof of Theorem~1]{IssaWK20},\cite[Proof of Theorem~5]{LiaoSKC20}. Let $\mathcal{U}=\cup_{x\in\mathcal{X}}\mathcal{U}_x$ (a disjoint union) and $|\mathcal{U}_x|=m_x$, for $x\in\mathcal{X}$. Define
\begin{align}
P_{U|X}(u|x)=\begin{cases}\frac{1}{m_x}, & u\in\mathcal{U}_x\\
0, & \text{otherwise}.
\end{cases}
\end{align}
This gives
\begin{align}
    P_U(u)&=\frac{P_X(x)}{m_x}, u\in\mathcal{U}_x\\
    P_{U|Y}(u|y)&=\frac{P_{X|Y}(x|y)}{m_x}, u\in\mathcal{U}_x.
\end{align}
So, we have
\begin{align}
     &\sup_{U:U-X-Y}\log\frac{\left(\sum_{u\in\mathcal{U}}P_U(u)^\alpha\right)^\frac{1}{\alpha}}{\left(\sum_{u\in\mathcal{U}}P_{U|Y}(u|y)^\alpha\right)^\frac{1}{\alpha}}\nonumber\\
     &\geq \sup_{m_x, x\in\mathcal{X}}\frac{\left(\sum_{x\in\mathcal{X}}\sum_{u\in\mathcal{U}_x}\frac{P_X(x)^\alpha}{m_x^\alpha}\right)^\frac{1}{\alpha}}{\left(\sum_{x\in\mathcal{X}}\sum_{u\in\mathcal{U}_x}\frac{P_{X|Y}(x|y)^\alpha}{m_x^\alpha}\right)^\frac{1}{\alpha}}\\
     &= \sup_{m_x, x\in\mathcal{X}}\frac{\left(\sum_{x\in\mathcal{X}}\sum_{u\in\mathcal{U}_x}\frac{P_Y(y)^\alpha}{P_{Y|X}(y|x)^\alpha}\frac{P_{X|Y}(x|y)^\alpha}{m_x^\alpha}\right)^\frac{1}{\alpha}}{\left(\sum_{x\in\mathcal{X}}\sum_{u\in\mathcal{U}_x}\frac{P_{X|Y}(x|y)^\alpha}{m_x^\alpha}\right)^\frac{1}{\alpha}}\\
     &= \sup_{m_x, x\in\mathcal{X}}\frac{\left(\sum_{x\in\mathcal{X}}\frac{P_Y(y)^\alpha}{P_{Y|X}(y|x)^\alpha}\frac{P_{X|Y}(x|y)^\alpha}{m_x^{\alpha-1}}\right)^\frac{1}{\alpha}}{\left(\sum_{x\in\mathcal{X}}\frac{P_{X|Y}(x|y)^\alpha}{m_x^{\alpha-1}}\right)^\frac{1}{\alpha}}\\
     &=\sup_{m_x, x\in\mathcal{X}}\left(\sum_{x\in\mathcal{X}}\left(\frac{P_Y(y)}{P_{Y|X}(y|x)}\right)^\alpha P_{\hat{X}}(x)\right)^\frac{1}{\alpha}\label{eqn:th5proofclaim1}\\
     &=\sup_{P_{\hat{X}}\ll P_{X|Y=y}}\left(\sum_{x\in\mathcal{X}}\left(\frac{P_Y(y)}{P_{Y|X}(y|x)}\right)^\alpha P_{\hat{X}}(x)\right)^\frac{1}{\alpha}\label{eqn:th5proofclaim2},
\end{align}
where \eqref{eqn:th5proofclaim1} follows by defining 
\begin{align}
    P_{\hat{X}}(x)=\frac{\left(\frac{P_{X|Y}(x|y)^\alpha}{m_x^{\alpha-1}}\right)}{\left(\sum_{x^\prime\in\mathcal{X}}\frac{P_{X|Y}(x^\prime|y)^\alpha}{m_{x^\prime}^{\alpha-1}}\right)}
\end{align}
and \eqref{eqn:th5proofclaim2} follows because $P_{\hat{X}}(x)$ can be made arbitrarily close to any distribution with the same support as $P_{X}$ for sufficiently large $m_x$, $x\in\mathcal{X}$, along the same lines as \cite[Equation~(60)]{LiaoSKC20}. 

Finally, note that
\begin{align}
    \sup_{P_{\hat{X}}\ll P_{X|Y=y}}\left(\sum_{x\in\mathcal{X}}\left(\frac{P_Y(y)}{P_{Y|X}(y|x)}\right)^\alpha P_{\hat{X}}(x)\right)^\frac{1}{\alpha}\!=\max_{x\in\mathcal{X}}\frac{P_X(x)}{P_{X|Y}(x|y)}.
\end{align}

This follows by noting that
\begin{align}
    &\left(\sum_{x\in\mathcal{X}}\left(\frac{P_Y(y)}{P_{Y|X}(y|x)}\right)^\alpha P_{\hat{X}}(x)\right)^\frac{1}{\alpha}\nonumber\\
    &\leq\left(\sum_{x\in\mathcal{X}}\left(\max_{x^\prime\in\mathcal{X}}\frac{P_Y(y)}{P_{Y|X}(y|x^\prime)}\right)^\alpha P_{\hat{X}}(x)\right)^\frac{1}{\alpha}\label{eqn:thm5proofclaim5}\\
    &=\left(\max_{x^\prime\in\mathcal{X}}\frac{P_Y(y)}{P_{Y|X}(y|x^\prime)}\right)\left(\sum_{x\in\mathcal{X}}P_{\hat{X}}(x)\right)^\frac{1}{\alpha}\\
    &=\left(\max_{x^\prime\in\mathcal{X}}\frac{P_{X}(x^\prime)}{P_{X|Y}(x|y)}\right)\left(\sum_{x\in\mathcal{X}}P_{\hat{X}}(x)\right)^\frac{1}{\alpha}\\
    &=\max_{x^\prime\in\mathcal{X}}\frac{P_{X}(x^\prime)}{P_{X|Y}(x|y)}
\end{align}
and that equality in \eqref{eqn:thm5proofclaim5} is attained by $P_{\hat{X}}$ such that $P_{\hat{X}}(x^*)=1$, for a fixed $x^*\in\argmax_{x\in\mathcal{X}}\frac{P_{X}(x)}{P_{X|Y}(x|y)}$. This completes the proof.

\fi

\end{document}